\newtheorem{theorem}{Theorem}
\newtheorem{lemma}{Lemma}
\newtheorem{remark}{Remark}
\newtheorem{definition}{Definition}
\begin{document}

\title{Spacecraft Safe Robust Control Using Implicit Neural Representation for Geometrically Complex Targets in Proximity Operations} 

\author{Hang Zhou}
\affil{School of Aeronautics and Astronautics, Zhejiang University, China} 

\author{Tao Meng}
\affil{School of Aeronautics and Astronautics, Zhejiang University, China\\ Hainan Research Institute of Zhejiang University, China} 

\author{Kun Wang}
\affil{Hangzhou Innovation Institute, Beihang University, Hangzhou, China} 

\author{Chengrui Shi}
\affil{School of Aeronautics and Astronautics, Zhejiang University, China} 

\author{Renhao Mao}
\affil{School of Aeronautics and Astronautics, Zhejiang University, China} 

\author{Weijia Wang}
\affil{School of Aeronautics and Astronautics, Zhejiang University, China} 

\author{Jiakun Lei}
\affil{International Innovation Institute, Beihang University, Hangzhou, China} 

\authoraddress{
Hang Zhou, Chengrui Shi, Renhao Mao, Weijia Wang are with the School of Aeronautics and Astronautics, Zhejiang University, Hangzhou 310027, China, E-mail: (zhou\_hang@zju.edu.cn; chengruishi@zju.edu.cn; maorh@zju.edu.cn; weijiawang@zju.edu.cn).  
Tao Meng is with the School of Aeronautics and Astronautics, Zhejiang University, Hangzhou 310027, China, and also with the Hainan Institute of Zhejiang University, Sanya 572025, China, E-mail: (mengtao@zju.edu.cn).  
Kun Wang is with the Hangzhou Innovation Institute, Beihang University, Hangzhou 310051, China (Email: wang\_kun@zju.edu.cn). 
Jiakun Lei is with the International Innovation Institute, Beihang University, Hangzhou 311115, China (Email: leijiakun@buaa.edu.cn). 
The research was Supported by the China Postdoctoral Science Foundation under Grant Number 2025M774251 and the Zhejiang Provincial Natural Science Foundation of China under Grant No. LR24F030001. 
(Corresponding Author: Tao Meng; Kun Wang.)}

\markboth{Zhou ET AL.}{Preprint Submitted To IEEE Transactions On Aerospace and Electronic Systems}
\maketitle

\begin{abstract}
This paper investigates the problem of safe control for spacecraft proximity operations, with a particular emphasis on the collision avoidance of a chaser spacecraft relative to a target spacecraft with complex geometry in the presence of disturbances. To ensure safety in such scenarios, a safe robust control framework is proposed that leverages implicit neural representations. To handle arbitrary target geometries without explicit modeling, we learn a neural signed distance function (SDF) from point cloud data via a enhanced implicit geometric regularization method, which incorporates an over-apporximation strategy to create a conservative, safety-prioritized boundary. The zero-level set of the learned neural SDF implicitly defines the target's surface, while the values and gradients provide critical information for safety controller design. This neural SDF representation underpins a two-layer hierarchcial safe robust control framework: a safe velocity generation layer and a safe robust controller layer. In the first layer, a second-order cone program is formulated to generate safety-guaranteed reference velocity by explicitly incorporating the under-approximation error bound. Furthermore, a circulation inequality is introduced to mitigate the local minimum issues commonly encountered in control barrier function (CBF) methods. In the second layer, a disturbance observer is incorporated, and a smooth safety filter is designed by explicitly accounting for the observer's estimation error, thereby enhancing robustness against external disturbances. Extensive numerical simulations and Monte Carlo analysis validate the proposed framework, demonstrating significantly improved safety margins and avoidance of local minima compared to conventional CBF approaches. 
\end{abstract}

\begin{IEEEkeywords}
    Complex geometry target, Implicit neural representation, Safe robust control, Control barrier function
\end{IEEEkeywords}

\section{INTRODUCTION}
With the continued advancement and deployment of space stations and other high-value spacecraft, the demand for efficient maintenance and servicing operations has become increasingly critical \cite{li_orbit_2019}. Recent technological progress, particularly in the development of small autonomous spacecraft, has opened new possibilities for cost-effective and reliable solutions \cite{ReddBringing}. These spacecraft can enhance mission autonomy and reduce operational costs, enabling a shift toward more sustainable and scalable servicing architectures. Among these emerging capabilities, proximity operations—which involve precise and safe relative motion in close-range environments—play a foundational role in applications such as satellite inspection \cite{day2020two,bohan2023parametric,faghihi2023multiple}, debris removal \cite{forshaw2017final, forshaw2016removedebris}, and rendezvous and docking \cite{fehse2003automated}. 

A central challenge in proximity operations lies in ensuring the safety of the chaser spacecraft, particularly by preventing collisions with the target during close-range maneuvers. The difficulty of this task is significantly amplified when the target spacecraft exhibits complex or irregular surface geometrie. Such spacecraft may lack standard structural symmetry, present protrusions like antennas, solar panels and robotic arms. These factors increase the risk of unexpected surface contact and complicate the design of safe, reactive, and computationally efficient  control strategies. Therefore, accurately capturing the target's geometry and incorporating that information into the control loop is essential for enabling safe  autonomous proximity operations.

To address the safety problem in spacecraft proximity operations, a variety of theoretical approaches have been developed. The most widely adopted methods can be broadly categorized into path planning, optimization-based control, barrier function-based methods, and reference governor (RG) methods. Path planning methods typically utilize search algorithms to construct feasible trajectories in the configuration space \cite{frey2017constrained,karaman2011sampling,deka2023astrodynamics}, however, their reliance on global search often leads to high computational costs and limited real-time applicability. Optimization-based methods reformulate the safety problem as a constrained optimal control problem, embedding safety requirements as constraints and solving for an optimal policy through numerical optimization \cite{zhang2023stochastic,li2020spacecraft}. Among these, Model Predictive Control (MPC) \cite{weiss2015model,bashnick2023fast,wang2023model} has emerged as a prominent optimization-based technique for spacecraft maneuvering under constraints. 
Barrier function-based methods encompass a range of techniques such as Artificial Potential Function (APF) \cite{zhou2023collision, wang2022artificial, hu2021velocity,hwang2022collision}, Prescribed Performance Control (PPC) \cite{li2023prescribed,wu2022adaptive}, and Barrier Lyapunov Function (BLF) \cite{tee2009barrier,huang2019adaptive}, each providing analytical tools for enforcing state constraints and ensuring safety during system evolution. An alternative approach for constraint handling is the RG \cite{bemporad2002reference, gilbert2002nonlinear}. In particular, the Explicit Reference Governor (ERG) \cite{hu2023spacecraft} relies on a guidance strategy typically constructed using the APF method, thereby inheriting the inherent limitations associated with APF-based approaches.

To address the limitations of traditional methods, particularly the trade-off between safety guarantees and computational efficiency, recent research has focused on formal methods that can provide rigorous safety certificates. Control barrier function (CBF) have emerged as a particularly promising framework, offering several advantages over traditional approaches: they provide formal safety guarantees through forward invariance properties, enable real-time implementation through efficient quadratic programming formulation \cite{ames2019control}, and allow for systematic handling of multiple safety constraints \cite{molnar_composing_2023, breeden_compositions_2023}. 
In spacecraft control applications, recent studies have explored various implementations of CBF. These works utilize simplified geometric representations,such as ellipsoids and capsules\cite{shi_safe_2025, wang_control_2024, wang2025safe}, and construct safety constraints using softmin and softmax approaches \cite{molnar2023composing}. 
Nevertheless, significant challenges remain in applying CBF methods to spacecraft proximity operations. A key difficulty lies in constructing continuously differentiable barrier functions that can accurately account for target spacecraft with extremely complex geometries. Existing methods often rely on simplifying assumptions and are often overly conservative in describing complex shapes.

The signed distance function (SDF) describes how far a point is from the boundary of a given set, while also indicating whether the point lies inside or outside the set \cite{boczko2005signeddistancefunctionnew}. As an implicit representation of surfaces, SDF was widely used in computer vision and graphics for tasks such as surface reconstruction and rendering \cite{Park_2019_CVPR,lin2020sdf}. Compared with other geometric representations, SDF offers direct access to both distance and gradient information relative to object surfaces, which is particularly beneficial for constructing CBF. 

Representing shapes as zero-level sets of neural networks learned SDF. we call such representations implicit neural representation (INR). Recently, impressive results have been achieved in object shape modeling using deep neural networks. DeepSDF \cite{Park_2019_CVPR} and Occupancy Nets \cite{mescheder2019occupancy} implicitly represent three-dimensional (3D) shapes by supervised learning via fully connected neural networks. Gropp et al. \cite{gropp2020implicit} further extended the DeepSDF model by introducing the Eikonal constraint to computing high fidelity INR directly from raw data. To promote conservative geometry approximation suitable for safety-critical applications, we design a loss function that explicitly favors over-approximation of the object's surface. Specifically, the loss assigns asymmetric penalties to the predicted signed distance values by combining their raw and absolute values with a tunable parameter. This encourages the network to produce non-negative outputs around the true surface, effectively pushing the learned zero-level set outward from the actual geometry. Consequently, the learned implicit surface forms a safe outer envelope that over-approximates the true object boundary.

This paper proposes a series of innovative technical solutions to the problem of safe close-range operation control of complex structure target spacecraft. The main contributions of this paper are summarized as follows:

\begin{enumerate}
    \item A deep neural SDF learning method with an over-approximation strategy is proposed to implicitly represent the complex geometry of the target spacecraft. By intentionally overestimating the SDF, the resulting implicit surface forms a conservative yet accurate outer envelope of the target. This representation exhibits enhanced accuracy and adaptability compared to traditional explicit geometric modeling techniques.
    \item A safety filter based on second-order cone programming (SOCP) is developed to generate velocity references with formal safety guarantees. The approach explicitly considers the worst-case approximation error of the learned SDF and incorporates relaxed circulation inequalities (CI) to enable the chaser spacecraft to escape from local minima.
    \item A smooth safety controller incorporating a disturbance observer (DO) is further designed to ensure safety under external disturbances. By actively compensating for disturbance effects, the proposed control strategy significantly reduces the conservatism of the admissible safety set. 
\end{enumerate}

The rest of this paper is organized as follows: the notations and necessary definitions are given in Section \ref{PreliminariesAndNotations} and the spacecraft model and relative position dynamics are given in Section \ref{modelAndDynamic}. The constraints are illustrated and then the control objective are given in Section \ref{ProblemFormulation}. The INR of the target spacecraft is presented in Section \ref{ImplicitNeuralRepresentation}, where the SDF learning method and loss function are described. The safe robust control framework design is presented in Section \ref{RobustSafeBacksteppingControllerDesign}, which includes the safety velocity generation layer and the safe robust controller layer. The simulation results are presented in Section \ref{simulation}, demonstrating the effectiveness of the proposed control framework. Finally, conclusions are given in Section \ref{conclusion}.

\section{PRELIMINARIES AND NOTATIONS}
\label{PreliminariesAndNotations}
\subsection{Notations}
This paper uses the following standard notations: $\mathcal{C}^{m}(\mathcal{X},\mathbb{R}^m)$ denotes the space of $m$-times continuously differentiable functions mapping $\mathcal{X} \in \mathbb{R}^n \to \mathbb{R}^n$; $\boldsymbol{a}^{\times}$ denotes the skew symmetric matrix form of $\boldsymbol{a} = [a_1,a_2,a_3]^{\top}$; $|\cdot{}|$ denotes the absolute value of a scalar; $\left\|\cdot\right\|$ denotes the Euclidean norm of a vector; $\boldsymbol{I}_{n\times n}$ denotes the identity matrix of size $n \times n$; $\boldsymbol{0}$ denotes the zero vector with suitable dimension; a continuous function $\alpha: [0,a) \to [0,\infty)$ is said to belong to class $\mathcal{K}_{\infty}$ if $\lim_{r\to a} \alpha(r) = \infty$ , $\alpha(0) = 0$, and it is strictly increasing; a function $\alpha: [-b,a)\to[-\infty, \infty),a > 0, b > 0 $, belongs to extended class $\mathcal{K}_{\infty}$ if it is strictly increasing and $\alpha(0) = 0$.

\subsection{Definitions}
\begin{definition}
    \label{Def_SDF}
    Signed Distance Function (SDF, \cite{boczko2005signeddistancefunctionnew}): Let $\Omega$ be a subset of $\mathbb{R}^n$, and $\partial \Omega$ be its boundary. The distance between a point $\boldsymbol{p} \in \mathbb{R}^n$ and the subset $\partial \Omega$ is defined as $d(\boldsymbol{p},\partial \Omega)$, the SDF from a point $\boldsymbol{p}$ of $\mathbb{R}^n$ to $\Omega$ is defined by:
    \begin{equation}
        \label{def_sdf}
        f(\boldsymbol{p}) = \begin{cases}
            d(\boldsymbol{p},\partial \Omega) & \text{if } \boldsymbol{p} \notin \Omega \\
            0 & \text{if } \boldsymbol{x} \in \partial \Omega \\
            -d(\boldsymbol{p},\partial \Omega) & \text{if } \boldsymbol{p} \in \Omega,
        \end{cases}
    \end{equation}
    and its gradient satisfies the eikonal equation $\|\nabla f \| = 1$
\end{definition}

\begin{definition}
    \label{dcbf}
    Control Barrier Function (CBF, \cite{ames_control_2017}):
    Consider the nonlinear affine control system $\dot{\boldsymbol{x}} = f(\boldsymbol{x})+g(\boldsymbol{u})$, where $f$ and $g$ are locally Lipschitz, $\boldsymbol{x}\in \mathcal{X} \subset\mathbb{R}^n$ and $\boldsymbol{u}\in U\subset\mathbb{R}^m$. A continuously differentiable function $h \in \mathcal{C}^{1}(\mathcal{X})$ is called a CBF if there exists an extended class $\mathcal{K}_{\infty}$ function $\alpha$ such that for all $\boldsymbol{x}\in \mathcal{X}$:
    \begin{equation}
        \label{cbf_condition_inequality}
        \sup_{\boldsymbol{u}\in U}\left[L_fh(\boldsymbol{x})+ L_gh(\boldsymbol{x})\boldsymbol{u}\right] \geq-\alpha(h(\boldsymbol{x})).
    \end{equation}
\end{definition}


\begin{definition}
    \label{ISS_CLF}
    Input-to-state stable Control Lyapunov Function (ISS-CLF, \cite{kolathaya_input_2018}): A continuously differential and positive definite function V is an ISS-CLF for system $\dot{\boldsymbol{x}} = f(\boldsymbol{x})+g(\boldsymbol{u}+\boldsymbol{d})$ with disturbance $\boldsymbol{d}$, if there exist functions $\alpha_2 \in \mathcal{K}_{\infty}$ and $\alpha_3 \in \mathcal{K}_{\infty}$ such that for all $\boldsymbol{x}$:
    \begin{equation}
        \begin{aligned}
            &\inf_{\boldsymbol{u} \in U}\left\{L_fV(\boldsymbol{x}) + L_gV(\boldsymbol{x})(\boldsymbol{u} + \boldsymbol{d})\right\} \\
            &\leq -\alpha_2(\|V(\boldsymbol{x})\|) + \alpha_3(\|\boldsymbol{d}\|).
        \end{aligned}
    \end{equation}
\end{definition}

\section{DYNAMICS SYSTEM MODELING}
\label{modelAndDynamic}
\subsection{Coordinate Frames}
As illustrated in Fig.(\ref{fig_frames}), the following coordinate systems are defined: The Earth-centered inertial (ECI) frame is denoted by $\{\mathcal{I}\}: O_E \boldsymbol{X}_{E}\boldsymbol{Y}_{E}\boldsymbol{Z}_{E}$. The body frame of the target spacecraft, denoted as $\{\mathcal{B}_{T}\}$, is also introduced, with its origin located at the spacecraft's center of mass. The coordinate axes of each frame are aligned with the spacecraft's principal axes of inertia and follow the right-hand rule. Additionally, the target's orbital frame, denoted as $\{\mathcal{O}_{T}\}$, is defined with its origin at the target's center of mass. The $\boldsymbol{X}_{\mathcal{O}_{T}}$ axis is aligned with the target's velocity vector, while the $\boldsymbol{Z}_{\mathcal{O}_{T}}$ axis points towards the center of Earth. The $\boldsymbol{Y}_{\mathcal{O}_{T}}$ axis, along with the $\boldsymbol{X}_{\mathcal{O}_{T}}$ and $\boldsymbol{Z}_{\mathcal{O}_{T}}$, forms a right-hand coordinate system. This frame is also commonly referred to as the vehicle velocity and local horizontal (VVLH) coordinate system. For computational convenience, assuming that the body frame $\{\mathcal{B}_{T}\}$ coincides with the orbital frame $\{\mathcal{O}_{T}\}$ at all times. Unless otherwise specified, vectors are expressed in the orbital frame $\{\mathcal{O}_{T}\}$.

\begin{figure}[hbt!]
    \centering
    \includegraphics[scale=0.15]{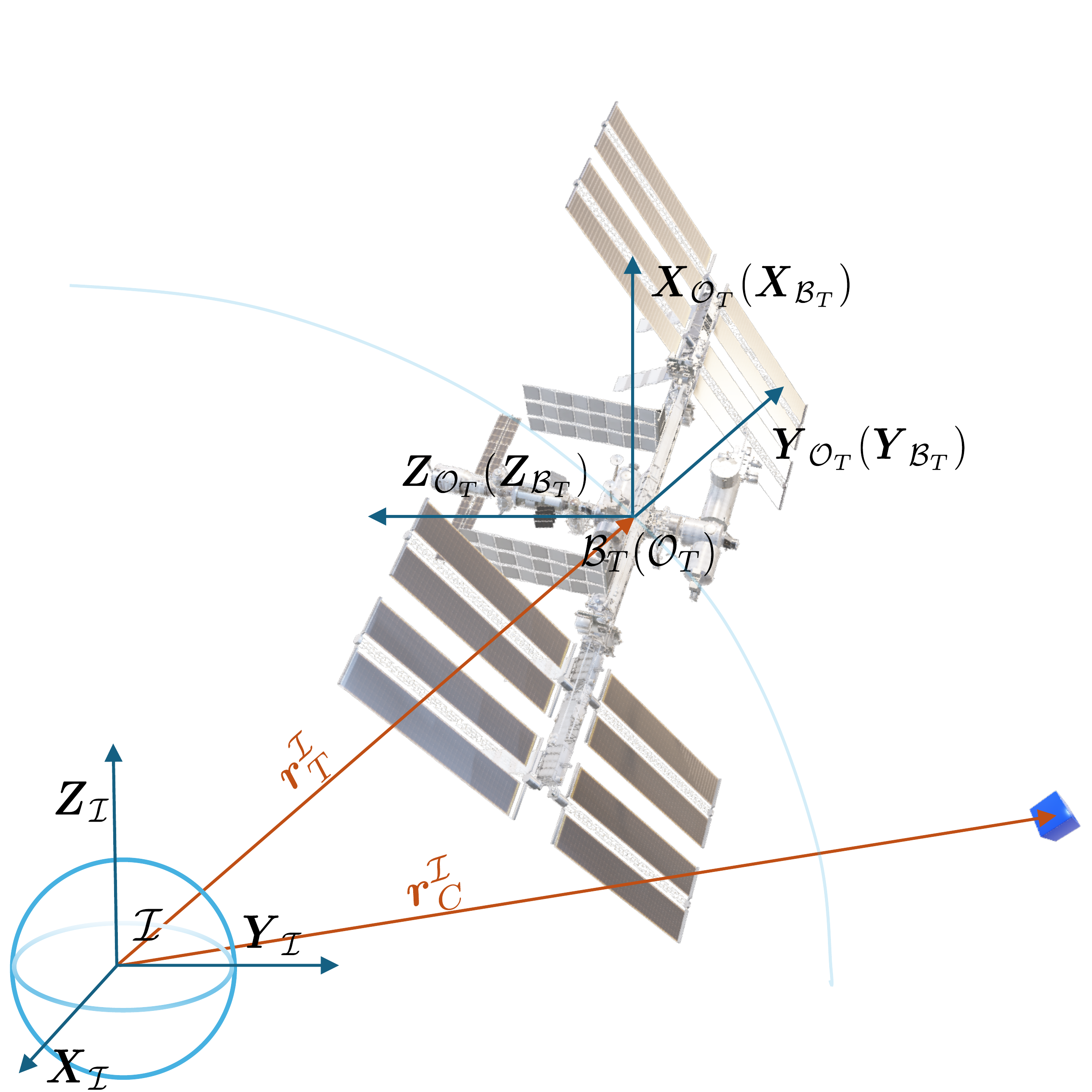}
    \caption{Illustration of coordinate systems}
    \label{fig_frames}
\end{figure}

\subsection{Relative Position Dynamics} 
Assuming the target is in a specific orbit and the chaser is positioned nearby, let $\boldsymbol{r}, \boldsymbol{v}$ denote the position and velocity of the chaser relative to the target. Furthermore, let $\boldsymbol{r}_{S\mathcal{I}}$ and $\boldsymbol{r}_{T\mathcal{I}}$ denote the positions of the chaser and the target, respectively, in frame $\{\mathcal{I}\}$. The relative orbital dynamics of the spacecraft can then be described as \cite{yamanaka2002new}:

\begin{equation}
    \label{relative_orbital_dynamics_equation}
m \ddot{\boldsymbol{r}} + m \boldsymbol{C}_1 \dot{\boldsymbol{r}} + m \boldsymbol{C}_2 \boldsymbol{r} + m \boldsymbol{g} = \boldsymbol{F} + \boldsymbol{d}
\end{equation}
where $\boldsymbol{C}_1 = 2 \boldsymbol{\omega}_o^{\times}$, $\boldsymbol{C}_2 = \boldsymbol{\dot{\omega}}_o^{\times} + \boldsymbol{\omega}_o^{\times} \boldsymbol{\omega}_o^{\times}$, and $\boldsymbol{g} = \mu \left( \frac{\boldsymbol{r}_{S\mathcal{I}}}{\|\boldsymbol{r}_{S\mathcal{I}}\|_{2}^{3}} - \frac{\boldsymbol{r}_{T\mathcal{I}}}{\|\boldsymbol{r}_{T\mathcal{I}}\|_{2}^{3}} \right)$. The term $\boldsymbol{d}$ represents external disturbances force, while $\boldsymbol{F}$ denote the position control force for the service. The parameter $m$ represents the mass of the chaser. The $\boldsymbol{\omega}_o = \begin{bmatrix}0 & -f_{\theta} & 0\end{bmatrix}^\top$, where $f_{\theta}$ denotes the true anomaly of the target. For arbitrary orbits $\dot{f}_{\theta} = \sqrt{\frac{\mu}{{a^{3}(1 - e^{2})^{3}}}} \left(1 + e \cos(f_{\theta})\right)^{2}$ , $\ddot{f}_{\theta} = - \frac{2 \mu e \sin(f_{\theta}) \left(1 + e \cos(f_{\theta})\right)^{3}}{{a^{3}(1 - e^{2})^{3}}}$. $a$ denotes the semimajor axis of the target, $e$ denotes the eccentricity of the target, and $\mu$ is the gravitational constant. Then Eq.(\ref{relative_orbital_dynamics_equation}) can be rewritten as:

\begin{subequations}
\begin{align}
    \dot{\boldsymbol{r}} &= \boldsymbol{v}  \label{kinematic_system}\\
    \dot{\boldsymbol{v}} &= -\boldsymbol{C}_1\boldsymbol{v} - \boldsymbol{C}_2\boldsymbol{r} + \boldsymbol{g} + \frac{1}{m}\left(\boldsymbol{F} + \boldsymbol{d} \label{dynamics_system} \right).
\end{align}
\label{relative_orbital_dynamics}
\end{subequations} 

\section{PROBLEM FORMULATION} 
\label{ProblemFormulation}
The primary objective of this work is to develop a safe robust control framework guaranteeing operational safety during proximity maneuvering missions around irregularly complex shaped target spacecraft. 

\subsection{Constraints Illustration and Representation}

\subsubsection{Safety Constraints}
During the mission, the chaser must satisfy a safety constraint that ensures it does not collide with the target. For target with complex structures, it is challenging to approximate their geometry using regular shapes such as cuboids, capsules, or ellipsoids. To address this challenge, this paper employs a SDF to implicitly represent the complex surface of the target, enabling more detailed and accurate characterization of its intricate geometry. As Definition \ref{Def_SDF}, the condition $f_{sdf} = 0$ implicitly defines the boundary of a set. For a 3D object, this boundary corresponds exactly to the object's surface. 

Therefore, in order to ensure that the chaser does not collide with the target, the collision constraint can be naturally written as: 
\begin{equation}
    \label{relative_position_constraint}
    h(\boldsymbol{r}) = f_{sdf}(\boldsymbol{r}) \geq 0.
\end{equation}

Moreover, the SDF satisfies the following property:
\begin{equation}
    \label{relative_position_constraint_gradient}
    \nabla h(\boldsymbol{r}) = \| \nabla f_{sdf}(\boldsymbol{r}) \| = 1. 
\end{equation}

\subsubsection{Relative Velocity Constraint}
Given the low kinetic energy requirements for the chaser during the mission, velocity constraints must be imposed to ensure that the spacecraft operates within acceptable limits. These velocity constraints can be expressed as follows:
\begin{equation}
    \label{relative_velocity_constraint}
\boldsymbol{v}=\begin{bmatrix}v_x &v_y&v_z\end{bmatrix}^\top  \in \mathbb{R}^3, \left|v_{i}\right| \leq v_{\text{max}}, i = x, y, z.
\end{equation}

\subsection{Control Problem Formulation}
The control framework is illustrated in Fig.(\ref{control_framework}), and the objective of this work is to design a control law $\boldsymbol{F}$ for the relative orbital dynamics described by Eq.(\ref{relative_orbital_dynamics}) such that the chaser converges from its initial position $\boldsymbol{r}_s$ to the desired position $\boldsymbol{r}_d$, while satisfying the constraints in Eq.(\ref{relative_position_constraint}) and Eq.(\ref{relative_velocity_constraint}), and accounting for a bounded external disturbance $\boldsymbol{d}$. 

\begin{remark}
    \label{remark_velocity_constraint}
    The velocity constraints Eq.(\ref{relative_velocity_constraint}) are treated as soft constraints, meaning that their violation is permitted but should be taken into consideration during the generation of a safe reference velocity. This is because safety is explicitly defined in terms of collision avoidance in position space, whereas strict adherence to velocity constraints is not essential for ensuring safety.
\end{remark}

\begin{figure*}[hbt!]
    \centering
    \includegraphics[scale=0.55]{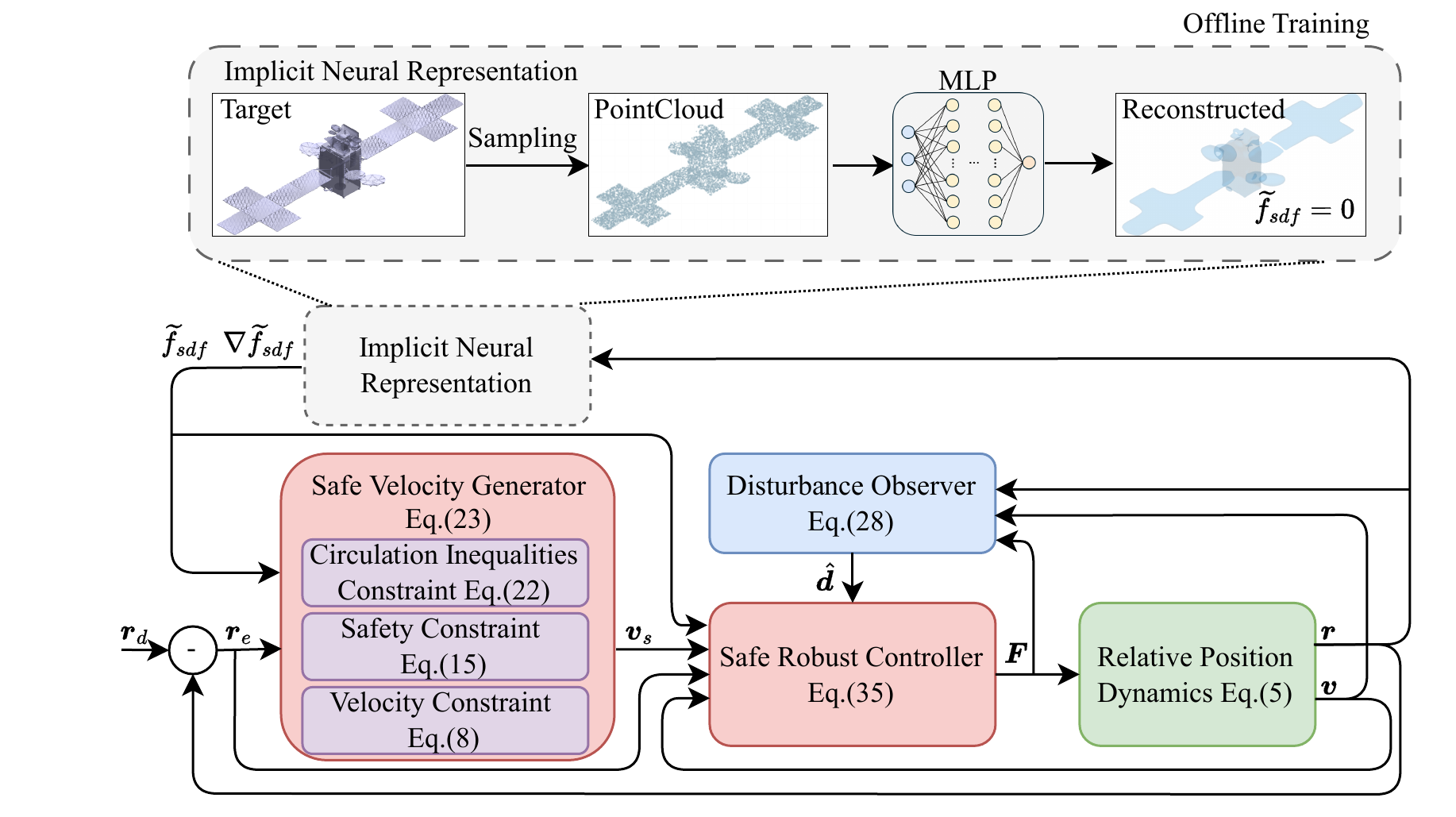}
    \caption{Illustration of control framework}
    \label{control_framework}
\end{figure*}

\section{IMPLICIT NEURAL REPRESENTATION}
\label{ImplicitNeuralRepresentation}
Inspired by existing research on multilayer perceptron (MLP) approximation of SDF \cite{gropp2020implicit}, \cite{parkDeepSDFLearningContinuous2019}, we employ a fully connected neural network $\widetilde{f}_{sdf}$ to approximate the SDF of target structure. To enhance geometric fidelity and reduce surface ambiguity, an improved over-approximation strategy is employed. This strategy operates by deliberately overestimating the SDF, thereby pushing the zero-level set outward from the true surface. As a result, the learned implicit surface forms a conservative envelope that safely encloses the target geometry.

\subsection{Data Generation} 
In this work, the shape of the target spacecraft is assumed to be known. Accordingly, a dataset $\mathcal{P}$ is prepared, consisting of pairs of 3D point samples and their corresponding SDF values:
\begin{equation}
    \mathcal{P} = \left\{\left(\boldsymbol{p}_i, d_i\right)\right\}_{i=1}^{I} \subset \mathbb{R}^3 \times \mathbb{R}, i = 1,2,\ldots,I
\end{equation} 
where $\boldsymbol{p}_i$ represents the $i$-th point in the point cloud, and $d_i$ is the signed distance value corresponding to $\boldsymbol{p}_i$. 

\subsection{Loss Function}
The loss function is defined as $\mathcal{L} = \mathcal{L}_P + \eta  \mathcal{L}_E$, with a parameter $\eta  > 0$ and $\mathcal{L}_P$ and $\mathcal{L}_E$ defined as follows: 

\begin{equation}
    \begin{aligned}
        \label{loss_function_LP}
        &\mathcal{L}_P = \frac{1}{I} \sum_{i=1}^{I} \left( \frac{\kappa- 1}{2} (\widetilde f_{sdf}(\boldsymbol{p}_i)-d_i) + \frac{\kappa - 1}{2} |\widetilde f_{sdf}(\boldsymbol{p}_i)-d_i| \right) \\
        &\mathcal{L}_E = \frac{1}{I} \sum_{i=1}^{I} \left( \| \nabla \widetilde f_{sdf}(\boldsymbol{p}_i)\| - 1 \right)^2
    \end{aligned}
\end{equation}
where, $\mathcal{L}_P$ encourages that the network output SDF error is close to zero at the training data points, $\kappa$ is the weighting coefficients for positive and negative results respectively, used to control the overestimation and underestimation of the approximation results, $\mathcal{L}_E$ is called the $Eikonal$ term encourages that the gradient of the network output SDF $ \nabla \widetilde f_{sdf}$ is close to unit length at the training data points. By adjusting the value of $\eta$, the influence of these two loss terms can be balanced. 

\subsection{Computing Gradients}
Since the loss function incorporates gradient terms derived from the network's output, and the subsequent controller design must account for gradients of the learned SDF, the development of efficient gradient computation methods emerges as a crucial consideration. 

Each layer of the MLP $\widetilde f_{sdf}$ is formulated as $\boldsymbol{y}^{\ell+1} = \boldsymbol{\tau}\left(\boldsymbol{W}^{\ell}\boldsymbol{y}^{\ell} + \boldsymbol{b}^{\ell}\right) $. where $\boldsymbol{\tau} : \mathbb{R} \to \mathbb{R}$ denotes a nonlinear differentiable activation function, with $\boldsymbol{W}^{\ell}$ and $\boldsymbol{b}^{\ell}$ representing the layer's trainable weight matrix and bias vector, respectively. Consequently, through the chain rule of differentiation, the gradients satisfy 
\begin{equation}
    \nabla_{\boldsymbol{x}}\boldsymbol{y}^{\ell+1}=\operatorname{diag}\left( {\boldsymbol{\tau}}^{\prime}\left(\boldsymbol{W}^{\ell}\boldsymbol{y}^{\ell}+\boldsymbol{b}^{\ell}\right)\right)\boldsymbol{W}^{{\ell}}\nabla_{\boldsymbol{x}}\boldsymbol{y}^{\ell}
\end{equation}
where $\boldsymbol{\tau} ^{\prime}$ denotes the derivative of the activation function. In practical implementations, automatic differentiation facilitates efficient computation of the partial derivatives $\nabla \widetilde{f}_{sdf}(\boldsymbol{x})$.

\subsection{Worst Case Estimation}
To design a safety-guaranteed controller, we must analyze the training results to quantify the worst-case approximation of the learned SDF. For safety controller design, over-approximation is acceptable (albeit resulting in conservative outcomes), while under-approximation risks compromising safety. As shown in the Fig.(\ref{approximation}), two different approximations of $f_{sdf} = 0$ are shown. The over-approximation result on the left is more conservative than the actual value. At this time, when $\widetilde{f}_{sdf} > 0$, the original $f_{sdf}$ must also be greater than 0. The right is an example of under-approximation. When $\widetilde{f}_{sdf} > 0$, it is not necessarily guaranteed that $f_{sdf}>0$. 
Since the designed loss function does not strictly guarantee an over-approximation, it is necessary to analyze the under-approximation behavior of the trained model. This allows us to determine the worst-case approximation bound of the SDF, thereby ensuring safety guarantees in a rigorous manner.

\begin{figure}[hbt!]
    \centering
    \includegraphics[scale=0.15]{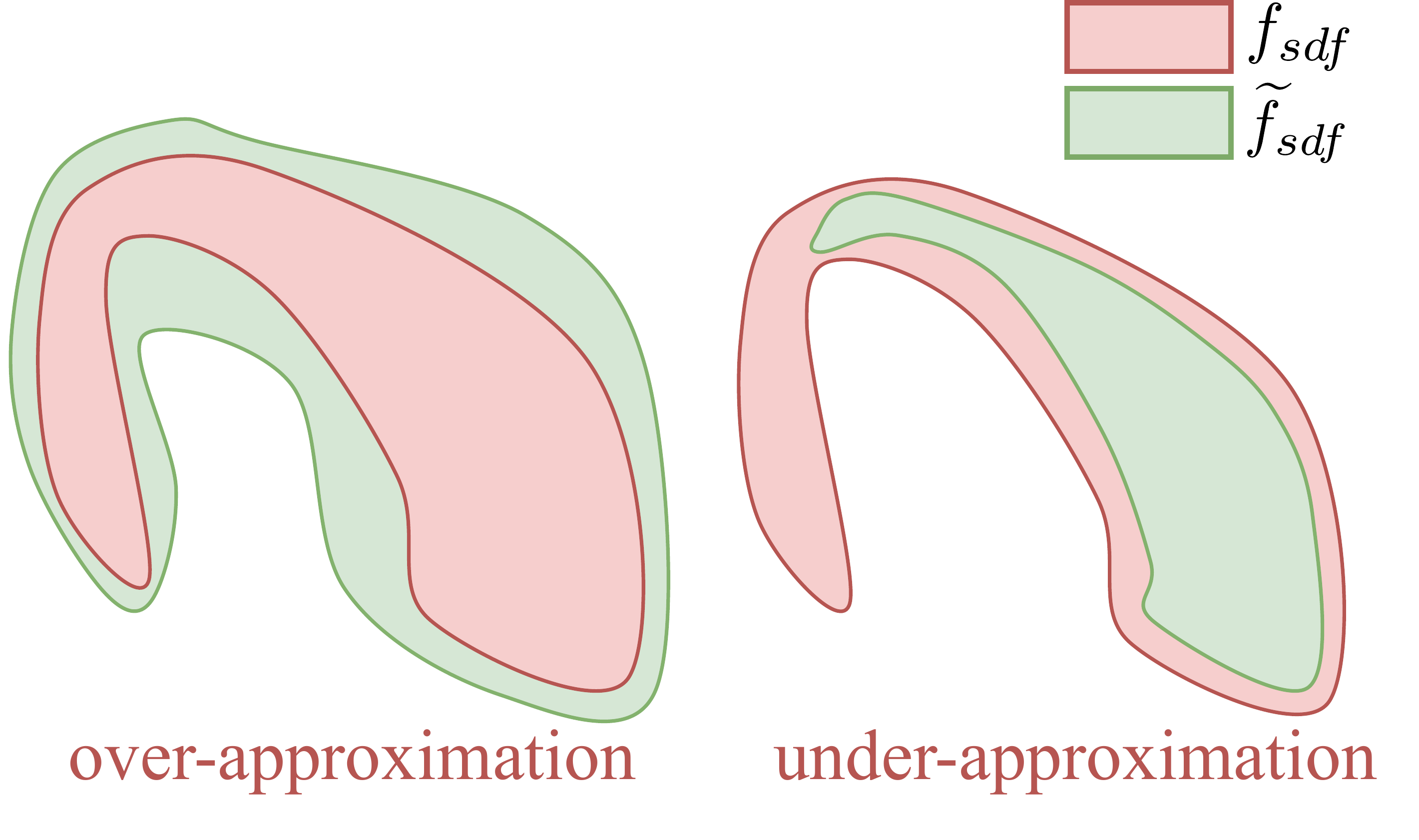}
    \caption{Illustration of approximation}
    \label{approximation}
\end{figure}

For the estimated SDF $\widetilde{f}_{sdf}(\boldsymbol{p}) \in \mathbb{R}$, the under-approximation estimation errors with respect to the ground-truth SDF $f_{sdf}(\boldsymbol{p})$ are defined as:
\begin{subequations}
\label{estimation_error}
\begin{align}
    e_1(\boldsymbol{p}) &:= \min(0,f_{sdf}(\boldsymbol{p}) - \widetilde{f}_{sdf}(\boldsymbol{p}))  \\
    \boldsymbol{e}_2(\boldsymbol{p}) &:= \nabla f_{sdf}(\boldsymbol{p}) - \nabla \widetilde{f}_{sdf}(\boldsymbol{p}). 
\end{align}
\end{subequations}
And the estimation error bound satisfies
\begin{equation}
\label{sdf_error_bound}
    \begin{aligned}
    |e_1(\boldsymbol{p})| &\leq {e}_h \\
    \|\boldsymbol{e}_2(\boldsymbol{p})\| &\leq {e}_{\nabla h}.
    \end{aligned}
\end{equation}

\section{SAFE ROBUST CONTROL FRAMEWORK DESIGN}
\label{RobustSafeBacksteppingControllerDesign}
When the INR of the target is established, a safe robust control framework is developed in this section through backstepping synthesis. The main idea behind backstepping is to treat the states of lower layers as virtual control inputs to the top layer, and then design a virtual controller for the top layer that would accomplish the given control objective.

The proposed control framework consists of two components: safe velocity generation and safe robust controller. For safe velocity generation, a safety filter including CI constraints and SDF collision constraints considering estimation errors is constructed in combination with the kinematic system to generate the safe reference velocity as the virtual control. For the safe robust controller, we integrate a DO with a safe backstepping control architecture. A smooth safety filter is employed to synthesize a robust control law that ensures system robust safety guarantees.

\subsection{Safe Velocity Generation}
As described in Section \ref{ImplicitNeuralRepresentation}, the zero-level set of the approximation SDF represents the surface of the spacecraft, while considering the worst-case approximation of the SDF. This makes it possible to formally formulate the safety constraint Eq.(\ref{relative_position_constraint}). 

Thus, the collosion avoidance constraint can be formulated as $h(\boldsymbol{r}) = f_{sdf} = \widetilde{f}_{sdf}(\boldsymbol{r}) + e_1(\boldsymbol{r}) \geq 0$, and the safe set can be written as 
\begin{equation}
    \mathcal{C} = \{\boldsymbol{r} \in \mathbb{R}^3, h(\boldsymbol{r}) \geq 0\}.
\end{equation}

\begin{theorem}
    Let 
\begin{equation}
    \begin{aligned}
    \label{vs_gengeration}
    &\mathcal{V}_s := \{ \boldsymbol{v}_s \in \mathbb{R}^3 :  \\
                    &\nabla {\widetilde{f}_{sdf}(\boldsymbol{r})}  \boldsymbol{v}_s -  \|\boldsymbol{v}_s\| e_{\nabla h} \geq - \alpha_0(\widetilde{f}_{sdf}(\boldsymbol{r}) - e_{h})  \}. 
    \end{aligned}
\end{equation}
$\boldsymbol{v}_s \in \mathcal{V}_s$ reflects a virtual controller for system Eq.(\ref{kinematic_system}) that guarantees the safe set $\mathcal{C}$ is forward invariant.
\end{theorem}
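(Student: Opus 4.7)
The plan is to invoke the standard CBF forward-invariance result (Definition~\ref{dcbf} together with the Nagumo/Brezis argument used in \cite{ames_control_2017}) applied to the true barrier $h(\boldsymbol r)=\widetilde f_{sdf}(\boldsymbol r)+e_1(\boldsymbol r)$. Since the system is the kinematic layer $\dot{\boldsymbol r}=\boldsymbol v$, it suffices to show that, for every $\boldsymbol v_s\in\mathcal V_s$, the inequality $\dot h(\boldsymbol r)\ge-\alpha_0(h(\boldsymbol r))$ holds on $\mathcal C$. Everything will follow by bounding the unknown terms $e_1$ and $\boldsymbol e_2$ via \eqref{sdf_error_bound} and using monotonicity of $\alpha_0$.

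First I would compute $\dot h$ along trajectories. By construction $\nabla h(\boldsymbol r)=\nabla\widetilde f_{sdf}(\boldsymbol r)+\boldsymbol e_2(\boldsymbol r)$, so
\begin{equation*}
\dot h=\nabla h(\boldsymbol r)\,\boldsymbol v_s=\nabla\widetilde f_{sdf}(\boldsymbol r)\,\boldsymbol v_s+\boldsymbol e_2(\boldsymbol r)^{\top}\boldsymbol v_s.
\end{equation*}
Using Cauchy--Schwarz together with $\|\boldsymbol e_2(\boldsymbol r)\|\le e_{\nabla h}$ gives the lower bound $\dot h\ge\nabla\widetilde f_{sdf}(\boldsymbol r)\,\boldsymbol v_s-e_{\nabla h}\|\boldsymbol v_s\|$. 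The definition of $\mathcal V_s$ in \eqref{vs_gengeration} then yields directly
\begin{equation*}
\dot h\ge-\alpha_0\!\left(\widetilde f_{sdf}(\boldsymbol r)-e_h\right).
\end{equation*}

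Next I would tie this lower bound to the actual barrier value. From \eqref{estimation_error}, $e_1(\boldsymbol r)\le0$ and $|e_1(\boldsymbol r)|\le e_h$, so $e_1(\boldsymbol r)\ge-e_h$ and consequently $h(\boldsymbol r)=\widetilde f_{sdf}(\boldsymbol r)+e_1(\boldsymbol r)\ge\widetilde f_{sdf}(\boldsymbol r)-e_h$. Because $\alpha_0$ is an extended class-$\mathcal K_\infty$ function, hence strictly increasing, this inequality propagates to $-\alpha_0(\widetilde f_{sdf}(\boldsymbol r)-e_h)\ge-\alpha_0(h(\boldsymbol r))$. Chaining the two lower bounds gives $\dot h\ge-\alpha_0(h(\boldsymbol r))$ for every $\boldsymbol r\in\mathcal C$ and every admissible $\boldsymbol v_s\in\mathcal V_s$, which is exactly the CBF condition of Definition~\ref{dcbf} for the kinematic system \eqref{kinematic_system}. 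Forward invariance of $\mathcal C$ then follows from the comparison lemma applied to $\dot h\ge-\alpha_0(h)$ with $h(\boldsymbol r(0))\ge0$.

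The only real subtlety, and what I would treat as the main obstacle, is handling the non-smoothness of $e_1$ (it is the $\min$ of a smooth function and $0$) and the fact that $h$ is only guaranteed to be differentiable where $f_{sdf}$ is. I would address this by noting that on the closed sublevel/superlevel sets the clipping by $\min(0,\cdot)$ is one-sided Lipschitz, so a Clarke/Dini lower derivative version of Nagumo's theorem suffices; equivalently, one can argue that $\widetilde f_{sdf}(\boldsymbol r)-e_h$ itself is a (more conservative) CBF whose zero-superlevel set is contained in $\mathcal C$, and invariance of the smaller set implies that $\mathcal C$ is never exited. Either route establishes the claim without requiring pointwise access to the unknown true gradient $\nabla f_{sdf}$.
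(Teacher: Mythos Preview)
Your proposal is correct and is essentially the same argument as the paper's. The paper phrases the robustification as a min--max step (minimising the left-hand side over $\|\boldsymbol e_2\|\le e_{\nabla h}$ and maximising the right-hand side over $|e_1|\le e_h$), while you obtain the identical bounds directly via Cauchy--Schwarz and the monotonicity of $\alpha_0$; the resulting inequality $\dot h\ge-\alpha_0(h)$ and the constraint defining $\mathcal V_s$ coincide exactly. Your closing remarks on the non-smoothness of $e_1$ and the explicit appeal to the comparison lemma/Nagumo are a welcome addition that the paper leaves implicit.
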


\begin{proof}
    As $h(\boldsymbol{r}) = \widetilde{f}_{sdf}(\boldsymbol{r}) + e_1(\boldsymbol{r})$ in Eq.(\ref{cbf_condition_inequality}):
    \begin{equation}
        \label{cbf_filter_constraint_original}
        \nabla \widetilde{f}_{sdf} \boldsymbol{v}_s + \boldsymbol{e}_2(\boldsymbol{r})^{\top} \boldsymbol{v}_s \geq -\alpha_0(\widetilde{f}_{sdf}(\boldsymbol{r}) + e_{1}(\boldsymbol{r})).
    \end{equation}

    For any fixed $\boldsymbol{r}$ and any errors $e_1(\boldsymbol{r})$ and $\boldsymbol{e}_2(\boldsymbol{r})$ satisfying Eq.(\ref{sdf_error_bound}), we need the minimum value of the left-hand side greater than the maximum value of the right-hand side to ensure that Eq.(\ref{cbf_filter_constraint_original}) still holds, namely:
    \begin{equation}
        \label{cbf_socp_minmax}
        \begin{aligned}
            &\min_{\|\boldsymbol{e}_2(\boldsymbol{r})\| \leq e_{\nabla h} } \{\nabla \widetilde{f}_{sdf} \boldsymbol{v}_s + \boldsymbol{e}_2(\boldsymbol{r})^{\top} \boldsymbol{v}_s \} \geq  \\
            &\max_{|e_1(\boldsymbol{r})| \leq e_h} \{-\alpha_0(\widetilde{f}_{sdf}(\boldsymbol{r}) + e_{1}(\boldsymbol{r}))\}.
        \end{aligned}
    \end{equation}

    Note that since $e_h \geq 0$ and $\alpha_0$ is an extended class $\mathcal{K}_{\infty}$ function, the maximum value in Eq.(\ref{cbf_socp_minmax}) is obtained by:
    \begin{equation}
        \label{cbf_socp_max}
        \max_{|e_1(\boldsymbol{r})| \leq e_h} \{-\alpha_0(\widetilde{f}_{sdf}(\boldsymbol{r}) + e_{1}(\boldsymbol{r}))\} = -\alpha_0(\widetilde{f}_{sdf}(\boldsymbol{r}) - e_h).
    \end{equation}

    The minimum value is attained by setting $\boldsymbol{e}_2(\boldsymbol{r}) = -\nabla \widetilde{f}_{sdf}(\boldsymbol{r}) e_{\nabla h}$, which leads to:
    \begin{equation}
        \label{cbf_socp_min}
        \min_{\|\boldsymbol{e}_2(\boldsymbol{r})\| \leq e_{\nabla h} } \{\nabla \widetilde{f}_{sdf} \boldsymbol{v}_s + \boldsymbol{e}_2(\boldsymbol{r})^{\top} \boldsymbol{v}_s \} = \nabla {\widetilde{f}_{sdf}(\boldsymbol{r})}  \boldsymbol{v}_s -  \|\boldsymbol{v}_s\| e_{\nabla h}.
    \end{equation}

    Substituting Eq.(\ref{cbf_socp_max}) and Eq.(\ref{cbf_socp_min}) into Eq.(\ref{cbf_socp_minmax}), we obtain the CBF constraint Eq.(\ref{vs_gengeration}). 
\end{proof}

Therefore, the safe velocity can be obtained by solving the following second-order cone program (SOCP) problem:

\begin{equation}
    \label{cbf_socp0}
    \begin{aligned}
        &\min_{\boldsymbol{v}_s} \quad \frac{1}{2} \|\boldsymbol{v}_c - \boldsymbol{v}_s\|^2 \\
        &\text{s.t.} 
        \begin{aligned}
        & \quad \nabla {\widetilde{f}_{sdf}(\boldsymbol{r})} \boldsymbol{v}_s + \alpha_0(\widetilde{f}_{sdf}(\boldsymbol{r}) - e_{h}) \geq \|\boldsymbol{v}_s\| e_{\nabla h} 
        \end{aligned}
    \end{aligned}
\end{equation}
where $\boldsymbol{v}_c = -v_{\max}\mathrm{Tanh}(\boldsymbol{r}_e/k_p)$, with $\boldsymbol{r}_e = \boldsymbol{r} - \boldsymbol{r}_d$, denotes the reference velocity which reflects a stabilizing virtual controller for system Eq.(\ref{kinematic_system}). $k_p$ is a positive constant that adjusts the convergence speed.

Unfortunately, the formulation in Eq.(\ref{cbf_socp0}) is prone to have stable equilibrium points besides the only desired one $\boldsymbol{r} = \boldsymbol{r}_d$. Those equilibrium point is called a local minimum.
To solve this problem, we refer to the method of \cite{10472718}, introduce the concept of CI, and introduce an inequality term in the SOCP problem to guide the chaser out of the local minimum point to achieve the final convergence goal $\boldsymbol{r}_d$. The form of CI is as follows:

\begin{equation}
    T(\boldsymbol{r})^{\top} \boldsymbol{v}_s \geq \Upsilon(h(\boldsymbol{r}))
\end{equation}
where $\Upsilon: \mathbb{R}^+ \to \mathbb{R}$ is a continuously decreasing function that satisfies $\Upsilon(0) > 0$ , $\Upsilon(h) \to -\infty$ when $h \to \infty$, and $T(\boldsymbol{r}) = \boldsymbol{\Omega} \nabla \widetilde{f}_{sdf}(\boldsymbol{r})$. $\boldsymbol{\Omega}$ is a matrix to be designed.

In order to meet the feasibility of the SOCP problem, we add parameters to relax the CI constraints:
\begin{equation}
    \quad T(\boldsymbol{r})^{\top} \boldsymbol{v}_s  - \Upsilon(\widetilde{f}_{sdf}(\boldsymbol{r}) - e_{h}) \geq \sigma.
\end{equation}

After adding the CI, we have the following SOCP problem we called CCBF-SOCP:
\begin{equation}
    \label{cbf_socp_CI}
    \begin{aligned}
        &\min_{\boldsymbol{v}_s,\sigma} \quad \frac{1}{2} \|\boldsymbol{v}_c - \boldsymbol{v}_s\|^2 + p \sigma^2\\
        &\text{s.t.}
        \begin{aligned}
        & \quad \nabla {\widetilde{f}_{sdf}(\boldsymbol{r})} \boldsymbol{v}_s + \alpha_0(\widetilde{f}_{sdf}(\boldsymbol{r}) - e_{h}) \geq \|\boldsymbol{v}_s\| e_{\nabla h} \\
        & \quad T(\boldsymbol{r})^{\top} \boldsymbol{v}_s  - \Upsilon(\widetilde{f}_{sdf}(\boldsymbol{r}) - e_{h}) \geq \sigma \\
        & \quad \boldsymbol{A}_v \boldsymbol{v}_s \leq \boldsymbol{b}_v 
        \end{aligned}
    \end{aligned}
\end{equation}
where $\sigma$ is a slack variable, introduced to relax the CI constraints in order to ensure the feasibility of the CCBF-SOCP. $p$ is the parameter for $\sigma$. The matrix $\boldsymbol{A}_{v} = \begin{bmatrix}-\boldsymbol{I}_{3\times3} &\boldsymbol{I}_{3\times3} \end{bmatrix}^{\top}$ and vector $\boldsymbol{b}_v = \begin{bmatrix}v_{\max}\boldsymbol{I}_{3\times3} &v_{\max}\boldsymbol{I}_{3\times3} \end{bmatrix}^{\top}$ are used to define the velocity constraints.

\begin{theorem}
\label{thm_cbf_socp}
The solution $\boldsymbol{v}_s = \boldsymbol{0}$ is allways feasible for the proposed CCBF-SOCP problem Eq.(\ref{cbf_socp_CI})
\end{theorem}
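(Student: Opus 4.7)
The plan is to simply substitute $\boldsymbol{v}_s = \boldsymbol{0}$ into the three constraints of Eq.(\ref{cbf_socp_CI}) and verify that a feasible slack variable $\sigma$ can be chosen. Feasibility will follow from three observations taken in order: the CBF-type SOCP constraint collapses to a sign condition on $\widetilde{f}_{sdf}(\boldsymbol{r}) - e_h$; the relaxed CI constraint becomes a scalar condition on $\sigma$ alone, which is always satisfiable; and the box-type velocity constraint is trivially fulfilled by the origin.

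First I would examine the first constraint of Eq.(\ref{cbf_socp_CI}). Substituting $\boldsymbol{v}_s = \boldsymbol{0}$ kills both the gradient term $\nabla \widetilde{f}_{sdf}(\boldsymbol{r}) \boldsymbol{v}_s$ and the right-hand side norm term $\|\boldsymbol{v}_s\| e_{\nabla h}$, so the inequality reduces to $\alpha_0(\widetilde{f}_{sdf}(\boldsymbol{r}) - e_h) \geq 0$. Because $\alpha_0$ is an extended class $\mathcal{K}_\infty$ function with $\alpha_0(0) = 0$, this is equivalent to $\widetilde{f}_{sdf}(\boldsymbol{r}) - e_h \geq 0$. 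The previous theorem guarantees forward invariance of $\mathcal{C}$, and combined with the worst-case bound in Eq.(\ref{sdf_error_bound}), trajectories that start in the conservative safe set $\{\boldsymbol{r} : \widetilde{f}_{sdf}(\boldsymbol{r}) \geq e_h\}$ remain there, so the first constraint is automatically satisfied along any solution.

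Next I would handle the relaxed CI constraint. At $\boldsymbol{v}_s = \boldsymbol{0}$, the inner product $T(\boldsymbol{r})^\top \boldsymbol{v}_s$ vanishes and the constraint simplifies to $-\Upsilon(\widetilde{f}_{sdf}(\boldsymbol{r}) - e_h) \geq \sigma$. Since $\Upsilon$ evaluates to a finite real number at the current state, one can always pick $\sigma$ no larger than that value (for instance $\sigma = -\Upsilon(\widetilde{f}_{sdf}(\boldsymbol{r}) - e_h)$), which makes the inequality hold with equality. This is exactly the role of the slack $\sigma$: it ensures that no combination of gradient geometry and $\Upsilon$ can render the problem infeasible. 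Finally, the velocity box constraint $\boldsymbol{A}_v \boldsymbol{v}_s \leq \boldsymbol{b}_v$ trivially holds at $\boldsymbol{v}_s = \boldsymbol{0}$ because each entry of $\boldsymbol{b}_v$ equals $v_{\max} > 0$.

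The only step requiring care is the first one, where the unavoidable assumption is that $\boldsymbol{r}$ already lies in the conservative safe set. This is not a real obstacle because it is precisely the invariance that the CBF layer is designed to enforce, but it is worth stating explicitly so that the feasibility claim is understood as holding along closed-loop trajectories rather than as a global statement on $\mathbb{R}^3$. I would therefore close the proof by noting that whenever the SOCP is actually invoked during operation, all three constraints admit $(\boldsymbol{v}_s,\sigma) = (\boldsymbol{0}, -\Upsilon(\widetilde{f}_{sdf}(\boldsymbol{r}) - e_h))$ as a feasible point, establishing recursive feasibility of Eq.(\ref{cbf_socp_CI}).
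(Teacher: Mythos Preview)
Your proposal is correct and follows essentially the same constraint-by-constraint verification as the paper's proof. The only difference is that you are more explicit than the paper: where the paper simply asserts that $\alpha_0(\widetilde{f}_{sdf}(\boldsymbol{r}) - e_h) \geq 0$ is ``satisfied,'' you correctly unpack this as requiring $\widetilde{f}_{sdf}(\boldsymbol{r}) \geq e_h$ and flag it as an assumption that holds along closed-loop trajectories via the forward-invariance guaranteed by the CBF layer---a clarification the paper leaves implicit.
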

\begin{proof}
    We shall verify all of the three constraints in Eq.(\ref{cbf_socp_CI}) and emphasize that the inequalities are always satisfied with solution $\boldsymbol{v}_s = \boldsymbol{0}$.
    \begin{itemize}
        \item For the CBF constraint if $\boldsymbol{v}_s = \boldsymbol{0}$, the following condition is satisfied:
        \begin{equation}
            \alpha_0(\widetilde{f}_{sdf}(\boldsymbol{r}) - e_{h}) \geq 0.
        \end{equation}

        \item The CI constraint is satisfied as:
        \begin{equation}
            -\Upsilon(\widetilde{f}_{sdf}(\boldsymbol{r}) - e_{h}) \geq \sigma
        \end{equation}
        as $\sigma$ is a relaxation parameter, we can always choose appropriate $\sigma$ such that the inequality holds true.
        
        \item The third constraint is satisfied as:
        \begin{equation}
            A_v \boldsymbol{0} \leq b_v
        \end{equation}
        which holds true since $b_v$ is a constant vector.
    \end{itemize}
\end{proof}
\begin{remark}
    \label{remark_ccbf_socp}
    Theorem \ref{thm_cbf_socp} guarantees that the solution $\boldsymbol{v}_s = \boldsymbol{0}$ is always feasible for the CCBF-SOCP problem Eq.(\ref{cbf_socp_CI}). This means that the chaser can always remain at rest, ensuring that it does not collide with the target spacecraft.
\end{remark}

\subsection{Safe Robust Controller Design}
\subsubsection{Disturbance Observer}
Assume that the system disturbance $\boldsymbol{d}$ in Eq.(\ref{dynamics_system}) can be described by the following formula
\begin{equation}
    \label{disturbance_model}
    \dot{\boldsymbol{\xi}}(t) = \boldsymbol{A}\boldsymbol{\xi}(t) , \boldsymbol{d}(t) = \boldsymbol{C} \boldsymbol{\xi}(t)
\end{equation}
where $\xi \in \mathbb{R}^q$, and $\boldsymbol{A}$, $\boldsymbol{C}$ have the appropriate dimensions. 

We can design the nonlinear disturbance observer as:
\begin{equation}
    \label{disturbance_observer}
    \begin{aligned}
        &
        \begin{aligned}
            \dot{\boldsymbol{z}} = 
                &(\boldsymbol{A} - \frac{\phi(\boldsymbol{r})\boldsymbol{C}}{m})\boldsymbol{z} + \boldsymbol{A}r(\boldsymbol{r}) \\ 
                    &- \phi(\boldsymbol{r})(\frac{\boldsymbol{C}\varphi(\boldsymbol{r})}{m} -\boldsymbol{C}_1\boldsymbol{v} - \boldsymbol{C}_2\boldsymbol{r} + \boldsymbol{g} + \frac{\boldsymbol{F}}{m}) \\
        \end{aligned} \\
        &\hat{\boldsymbol{\xi}} = \boldsymbol{z} + \varphi(\boldsymbol{r}), \hat{\boldsymbol{d}} = \boldsymbol{C} \boldsymbol{\xi}
    \end{aligned}
\end{equation}
where the internal state variable $\boldsymbol{z} \in \mathbb{R}^3$ and the nonlinear function $\varphi(\boldsymbol{r})\in \mathbb{R}^3$. $\phi(\boldsymbol{r}) \in  \mathbb{R}^{3\times3}$ is the nonlinear observer gain and designed as $\phi(\boldsymbol{r}) = \frac{\partial \varphi(\boldsymbol{r})}{\partial \boldsymbol{r}}$. 

Let the disturbance estimation error as: $\boldsymbol{e}_d = \boldsymbol{d} - \hat{\boldsymbol{d}}$. By taking the time derivative of $\boldsymbol{e}_d$, one can obtain the error dynamics as follows:
\begin{equation}
    \label{disturbance_observer_error}
    \dot{\boldsymbol{e}}_d = (\boldsymbol{A} - \frac{\phi(\boldsymbol{r}) \boldsymbol{C}}{m} ) \boldsymbol{e}_d.
\end{equation}

\begin{lemma}
\label{lemma_disturbance_observer}
Derived from \cite{chen2004disturbance}, $V_{\boldsymbol{e}_d} = \frac{1}{2} {\boldsymbol{e}_d}^2$ is the Lyapunov function of the estimation error, if $\phi(\boldsymbol{r})$ is selected such that system Eq.(\ref{disturbance_observer_error}) is globally exponentially stable regardless of $\boldsymbol{r}$, there must exist a positive constant $\beta_e$, satisfying the following formula $\dot{V}_{\boldsymbol{e}_d} = -2 \beta_e V_{\boldsymbol{e}_d}$. 
\end{lemma}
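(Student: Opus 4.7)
My plan is to differentiate $V_{\boldsymbol{e}_d}$ along the error dynamics Eq.(\ref{disturbance_observer_error}) and then invoke the exponential-stability hypothesis on the observer gain $\phi(\boldsymbol{r})$ to extract the decay rate $\beta_e$. The calculation itself is essentially one line; the only substantive content is translating the globally exponentially stable (GES) assumption into a uniform quadratic bound on the closed-loop matrix.

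First I would differentiate $V_{\boldsymbol{e}_d}=\tfrac{1}{2}\boldsymbol{e}_d^{\top}\boldsymbol{e}_d$ along Eq.(\ref{disturbance_observer_error}) to obtain
\[
\dot V_{\boldsymbol{e}_d} \;=\; \boldsymbol{e}_d^{\top}\Big(\boldsymbol{A}-\frac{\phi(\boldsymbol{r})\boldsymbol{C}}{m}\Big)\boldsymbol{e}_d,
\]
noting that only the symmetric part of the closed-loop matrix $M(\boldsymbol{r}):=\boldsymbol{A}-\phi(\boldsymbol{r})\boldsymbol{C}/m$ contributes to the quadratic form. Next I would invoke the hypothesis, following \cite{chen2004disturbance}, that $\phi(\boldsymbol{r})$ is selected so that Eq.(\ref{disturbance_observer_error}) is GES regardless of $\boldsymbol{r}$; as formulated there, this is equivalent to the existence of a constant $\beta_e>0$ such that $\tfrac{1}{2}\bigl(M(\boldsymbol{r})+M(\boldsymbol{r})^{\top}\bigr)\preceq -\beta_e \boldsymbol{I}$ uniformly in $\boldsymbol{r}$. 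Substituting this bound yields $\dot V_{\boldsymbol{e}_d}\leq -\beta_e\|\boldsymbol{e}_d\|^{2} = -2\beta_e V_{\boldsymbol{e}_d}$, which is the inequality claimed by the lemma; the comparison lemma then delivers the exponential envelope $V_{\boldsymbol{e}_d}(t)\leq V_{\boldsymbol{e}_d}(0)e^{-2\beta_e t}$ and certifies that $V_{\boldsymbol{e}_d}$ is a bona fide Lyapunov function for the estimation error.

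The delicate step is the one that invokes the GES hypothesis. In general, uniform global exponential stability of a parameter-varying linear system guarantees only a parameter-dependent quadratic Lyapunov function $\boldsymbol{e}_d^{\top}P(\boldsymbol{r})\boldsymbol{e}_d$, not the plain quadratic $\tfrac{1}{2}\|\boldsymbol{e}_d\|^{2}$ used in the lemma statement. The cleanest way around this mismatch is to restrict the admissible observer-gain designs to those for which the symmetric part of $M(\boldsymbol{r})$ is itself uniformly negative definite; this is precisely what the phrasing of the hypothesis, borrowed from \cite{chen2004disturbance}, is intended to enforce by construction. With that caveat resolved, the remainder of the proof reduces to the textbook Lyapunov argument sketched above.
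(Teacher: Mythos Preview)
The paper does not actually supply a proof of this lemma; it is stated as ``Derived from \cite{chen2004disturbance}'' and then invoked directly in the proof of Theorem~\ref{safety_backstepping}. Your argument therefore goes beyond what the paper provides, and the computation you give---differentiating $V_{\boldsymbol{e}_d}$ along Eq.~(\ref{disturbance_observer_error}) and using a uniform negative-definiteness bound on the symmetric part of $\boldsymbol{A}-\phi(\boldsymbol{r})\boldsymbol{C}/m$---is the standard and correct way to justify the claim.

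One small point worth flagging: your derivation naturally produces the inequality $\dot V_{\boldsymbol{e}_d}\le -2\beta_e V_{\boldsymbol{e}_d}$, whereas the lemma as stated (and as used verbatim in the proof of Theorem~\ref{safety_backstepping}) writes this with an equality sign. You are right that without additional structure (e.g., $M(\boldsymbol{r})+M(\boldsymbol{r})^{\top}$ being a constant multiple of the identity) only the inequality is warranted; fortunately the inequality is all that is actually needed downstream, so this is a cosmetic issue in the paper rather than a gap in your reasoning. Your caveat about the GES hypothesis guaranteeing only a parameter-dependent $P(\boldsymbol{r})$ in general, and the need to restrict to gains making the symmetric part uniformly negative definite, is also well taken and more careful than the paper's own treatment.
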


\subsubsection{Composite Controller Design}
As we using DO to estimation disturbance, we can design the composite controller ${\boldsymbol{F}_{ref}}$ as follows:
\begin{equation}
    \label{composite_controller}
    \boldsymbol{F}_{ref} = \boldsymbol{u}_{ref}  - \hat{\boldsymbol{d}}
\end{equation}
where $- \hat{\boldsymbol{d}}$ is the disturbance compensation term, and $\boldsymbol{u}_{ref}$ is the state feedback terms designed as:
\begin{equation}
    \begin{aligned}
        \boldsymbol{u}_{ref} = &m(\boldsymbol{C}_1 \boldsymbol{v} + \boldsymbol{C}_2 \boldsymbol{r} - \boldsymbol{g} + \frac{ \partial \boldsymbol{v}_s}{\partial \boldsymbol{r}}\boldsymbol{v} \\ 
            &- \mu_v \frac{\partial V_0}{\partial \boldsymbol{r}} - \frac{\lambda}{2} (\boldsymbol{v}-\boldsymbol{v}_s)),
    \end{aligned}
\end{equation}
with 
\begin{equation}
    \label{labelV0}
    V_0 = \frac{1}{2}\boldsymbol{r}_e^{\top}\boldsymbol{H}_0\boldsymbol{r}_e
\end{equation}
is the Lyapunov function candidate, $\boldsymbol{H}_0$ is a positive definite matrix, $\mu_v > 0$, $\lambda > 0$ are a positive constant. $\frac{\partial \boldsymbol{v}_s}{\partial \boldsymbol{r}}$ is the Jacobian matrix of the reference velocity $\boldsymbol{v}_s$ with respect to the relative position $\boldsymbol{r}$.

\begin{remark}
    In proposed composite controller Eq.(\ref{composite_controller}), the reference controller $\boldsymbol{u}_{ref}$ is designed to stabilize the system Eq.(\ref{relative_orbital_dynamics}) in the absence of disturbance. Due to the existence of disturbance, trajectory of the chaser may not converge to the target position when only $\boldsymbol{u}_{ref}$ is used. To cope with this issue, we design the second disturbance compensation term $- \hat{\boldsymbol{d}}$. 
\end{remark}

\subsubsection{Smooth Safety Filter for Composite Controller}

In Lemma \ref{lemma_disturbance_observer} we have defined a Lyapunov function $V_{\boldsymbol{e}_d} = \frac{1}{2} {\boldsymbol{e}_d}^2$ to guarantee the global exponential stability of the disturbance estimation error $\boldsymbol{e}_d$. And based on the idea of backstepping, the modified CBF $h_1$ considering disturbance estimation is designed as follows: 
\begin{equation}
    \label{backstepping_CBF}
    h_1 = h - \frac{1}{2\mu_h} (\boldsymbol{v} - \boldsymbol{v}_s)^{\top} (\boldsymbol{v} - \boldsymbol{v}_s) - \beta V_{\boldsymbol{e}_d}
\end{equation}
where $\mu_h > 0$, $\beta > 0$ are positive constants.
Define a new safe set $\mathcal{C}_1$ as follows:
\begin{equation}
    \label{backstepping_safe_set}
    \mathcal{C}_1 = \left\{(\boldsymbol{r,v}) \in \mathbb{R}^6| h_1(\boldsymbol{r},\boldsymbol{v}) \geq 0 \right\}.
\end{equation}

Therefore, a smooth safe robust control law
\begin{equation}
    \label{safety_critical_controller}
    \boldsymbol{F} = \boldsymbol{u} - \hat{\boldsymbol{d}}
\end{equation}
is designed to ensure that the system state remains in the safety set $\mathcal{C}_1$. Where $\boldsymbol{u}$ is the smooth modification of the feedback term $\boldsymbol{u}_{ref}$ in $\boldsymbol{F}$ and has the following form:
\begin{equation}
    \begin{aligned}
        \boldsymbol{u} = \boldsymbol{u}_{ref} + \Lambda(a_{h1},b_{h1})\boldsymbol{P}_{h1},
    \end{aligned}
\end{equation}
where $\boldsymbol{P}_{h1}$ is the gradient term, expressed as:
\begin{equation}
    \boldsymbol{P}_{h1} = -\frac{1}{m\mu_h} (\boldsymbol{v} - \boldsymbol{v}_s)^{\top}.
\end{equation}

The scalar coefficient $\Lambda(a_{h1},b_{h1}) \in \mathbb{R}$, determined by $a_{h1}$ and $b_{h1}$, is defined as follows:
\begin{equation}
    \label{SmoothSafetyFiltersParam}
    \begin{aligned}
        a_{h1} = & P_{h1}^{\top} \boldsymbol{u}_{ref} + \nabla \widetilde{f}_{sdf} \boldsymbol{v} - e_{\nabla h} \| \boldsymbol{v} \| \\
                  & + \frac{1}{\mu_h} (\boldsymbol{v} - \boldsymbol{v}_s)^{\top} \left(\boldsymbol{C}_1 \boldsymbol{v} + \boldsymbol{C}_2 \boldsymbol{r} - \boldsymbol{g} + \frac{\partial \boldsymbol{v}_s}{\partial \boldsymbol{r}}\boldsymbol{v}\right) \\
                  & + \beta_c \left(\widetilde{f}_{sdf} - e_{h} - \frac{1}{2\mu_h} (\boldsymbol{v} - \boldsymbol{v}_s)^{\top} (\boldsymbol{v} - \boldsymbol{v}_s)\right) \\
                  & - \frac{\left\| \frac{1}{m\mu_h} (\boldsymbol{v} - \boldsymbol{v}_s)^{\top} \right\|^2}{2\beta(2\beta_e - \beta_c)} \\
        b_{h1} = & \| P_{h1} \|^{2} \\
        \Lambda(a_{h1}, b_{h1}) = 
        &\begin{cases}
            0, & b_{h1} = 0 \\
            \dfrac{1}{\epsilon} \ln\left(1 + e^{-\frac{\epsilon a_{h1}}{b_{h1}}} \right), & b_{h1} \neq 0
        \end{cases}
    \end{aligned}
\end{equation}
where $\epsilon$ and $\beta$, $\beta_e$, $\beta_c$ are positive design parameters.

\begin{remark}
    The proposed controller robustly ensures safety under time-varying disturbances. Unlike methods requiring disturbance estimation error bounds, proposed approach operates without such knowledge. Based on the fact that the velocity $\boldsymbol{v}$ asymptotically converges to its safe velocity $\boldsymbol{v}_s$ and the estimation error $\boldsymbol{e}_d$ converges to zero, we establish that the safe set $\mathcal{C}_1$ asymptotically converges to the nominal safe set $\mathcal{C}$, indicating reduced conservatism in the safe control design. 
\end{remark}

\subsection{Safety and Stability Analysis}
In this section, we give the following theorem about the safety and stability of the proposed control strategy:

\begin{theorem}
    \label{safety_backstepping} 
    Consider the system Eq.(\ref{relative_orbital_dynamics}) with the disturbance model Eq.(\ref{disturbance_model}). The smooth safe robust control law Eq.(\ref{safety_critical_controller}) with DO Eq.(\ref{disturbance_observer}) keep the safe set $\mathcal{C}_1$ defined in Eq.(\ref{backstepping_safe_set}) forward invariant.
\end{theorem}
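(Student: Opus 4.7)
The goal is to reduce the statement to the scalar CBF-type inequality $\dot{h}_1 \geq -\beta_c h_1$ along closed-loop trajectories; once this is shown, Nagumo's theorem (equivalently, a comparison-lemma argument on the scalar differential inequality) yields $h_1(\boldsymbol{r}(t),\boldsymbol{v}(t)) \geq 0$ for every $t \geq 0$ provided the initial condition lies in $\mathcal{C}_1$, which is exactly forward invariance of $\mathcal{C}_1$.

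The first step is to differentiate $h_1$ in Eq.(\ref{backstepping_CBF}) along the closed-loop system Eq.(\ref{relative_orbital_dynamics}) under the control $\boldsymbol{F} = \boldsymbol{u} - \hat{\boldsymbol{d}}$. I would split $\dot{h} = \nabla h \cdot \boldsymbol{v} = \nabla \widetilde{f}_{sdf}\,\boldsymbol{v} + \boldsymbol{e}_2^{\top}\boldsymbol{v}$ and lower-bound the second piece by $-e_{\nabla h}\|\boldsymbol{v}\|$ via Eq.(\ref{sdf_error_bound}); similarly, replace $h$ by its worst-case lower bound $\widetilde{f}_{sdf} - e_h$. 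Substituting Eq.(\ref{dynamics_system}), one observes that $\boldsymbol{F}+\boldsymbol{d} = \boldsymbol{u} + \boldsymbol{e}_d$, so the estimated disturbance cancels but a residual cross term $\boldsymbol{P}_{h1}\boldsymbol{e}_d$ persists. Finally, $-\beta\dot{V}_{\boldsymbol{e}_d} = 2\beta\beta_e V_{\boldsymbol{e}_d}$ follows from Lemma \ref{lemma_disturbance_observer}.

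The key estimate is to absorb the unknown cross term $\boldsymbol{P}_{h1}\boldsymbol{e}_d$ using Young's inequality with weight $\gamma = \beta(2\beta_e-\beta_c)$, which is positive under the design condition $\beta_c < 2\beta_e$. This gives $\boldsymbol{P}_{h1}\boldsymbol{e}_d \geq -\|\boldsymbol{P}_{h1}\|^2/[2\beta(2\beta_e-\beta_c)] - \beta(2\beta_e-\beta_c)V_{\boldsymbol{e}_d}$, so that the $V_{\boldsymbol{e}_d}$ contributions arising from the observer, from the Young residue, and from $\beta_c$ multiplying the $-\beta V_{\boldsymbol{e}_d}$ piece of $h_1$ cancel exactly. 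What remains on the right-hand side is precisely $a_{h1} + \Lambda(a_{h1},b_{h1})\,b_{h1}$: after writing $\boldsymbol{u} = \boldsymbol{u}_{ref} + \Lambda\,\boldsymbol{P}_{h1}$ one has $\boldsymbol{P}_{h1}\boldsymbol{u} = \boldsymbol{P}_{h1}\boldsymbol{u}_{ref} + \Lambda\|\boldsymbol{P}_{h1}\|^2 = \boldsymbol{P}_{h1}\boldsymbol{u}_{ref} + \Lambda b_{h1}$, and every other summand in Eq.(\ref{SmoothSafetyFiltersParam}) matches a term produced by the worst-case expansion of $\dot{h}_1 + \beta_c h_1$.

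It remains to verify the scalar inequality $a + \Lambda(a,b)\,b \geq 0$. For $b > 0$, a direct manipulation gives $a + \frac{b}{\epsilon}\ln(1+e^{-\epsilon a/b}) = \frac{b}{\epsilon}\ln(1+e^{\epsilon a/b}) \geq 0$; for the degenerate case $b_{h1} = 0$, we have $\boldsymbol{v} = \boldsymbol{v}_s$, all cross-coupling terms vanish, and $a_{h1}$ reduces to $\nabla\widetilde{f}_{sdf}\,\boldsymbol{v}_s - e_{\nabla h}\|\boldsymbol{v}_s\| + \beta_c(\widetilde{f}_{sdf}-e_h)$, which is non-negative by the safe-velocity constraint Eq.(\ref{vs_gengeration}) provided $\alpha_0$ is chosen compatibly with $\beta_c$. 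The main obstacle is the careful book-keeping of the $V_{\boldsymbol{e}_d}$ terms: the specific compensation $-\|\boldsymbol{P}_{h1}\|^2/[2\beta(2\beta_e-\beta_c)]$ appearing in $a_{h1}$ and the choice of $\gamma$ must be simultaneously engineered so that the Young residue and the observer's exponential contraction exactly balance $\beta_c$ times the $-\beta V_{\boldsymbol{e}_d}$ piece of $h_1$; once these cancellations are verified, Nagumo's condition closes the argument.
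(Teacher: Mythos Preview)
Your proposal is correct and follows essentially the same route as the paper: differentiate $h_1$, absorb the disturbance-error cross term via Young's inequality (the paper writes it as a completion of squares) with weight $\beta(2\beta_e-\beta_c)$ so that the $V_{\boldsymbol{e}_d}$ contributions cancel exactly, then recognize the remainder as $a_{h1}+\Lambda(a_{h1},b_{h1})\,b_{h1}$. In fact you go slightly further than the paper by explicitly verifying the softplus identity $a+\Lambda(a,b)\,b\geq0$ and handling the degenerate case $b_{h1}=0$, both of which the paper leaves implicit.
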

\begin{proof}
    According to Lemma \ref{lemma_disturbance_observer}, we have that $ \dot{V}_{\boldsymbol{e}_d} = -2 \beta_e V_{\boldsymbol{e}_d}$, where $\beta_e$ is a positive constant. Taking the time derivative of $h_1$ we have
    \begin{equation}
        \begin{aligned}
            \dot{h}_1 
            =\; & (\nabla \widetilde{f}_{sdf} + \boldsymbol{e}_2)\boldsymbol{v} 
            - \frac{1}{\mu_h} (\boldsymbol{v} - \boldsymbol{v}_s)^{\top} 
            \bigg(-\boldsymbol{C}_1\boldsymbol{v} - \boldsymbol{C}_2\boldsymbol{r}  \\
            & + \boldsymbol{g} + \frac{1}{m}\left(\boldsymbol{u} - \hat{\boldsymbol{d}} + \boldsymbol{d} \right) 
            - \frac{\partial \boldsymbol{v}_s}{\partial \boldsymbol{r}}\boldsymbol{v} \bigg) 
            + \beta \beta_e \boldsymbol{e}_d^2 \\[1ex]
            =\; & \Gamma 
            - \frac{1}{m\mu_h} (\boldsymbol{v} - \boldsymbol{v}_s)^{\top} \boldsymbol{u} 
            - \frac{1}{m\mu_h} (\boldsymbol{v} - \boldsymbol{v}_s)^{\top} {\boldsymbol{e}_d} \\
            & + \left(\beta \beta_e - \frac{\beta \beta_c}{2}\right)\boldsymbol{e}_d^2 
            + \frac{\beta \beta_c}{2}\boldsymbol{e}_d^2 \\[1ex]
            =\; & \Gamma 
            - \frac{1}{m\mu_h} (\boldsymbol{v} - \boldsymbol{v}_s)^{\top} \boldsymbol{u} \\
            & + \left\| 
            \sqrt{\beta \beta_e + \frac{\beta \beta_c}{2}}\, \boldsymbol{e}_d 
            + \frac{ - \frac{1}{m\mu_h} (\boldsymbol{v} - \boldsymbol{v}_s)^{\top} }
                { \sqrt{2\beta (2\beta_e - \beta_c)} } 
            \right\|^2 \\
            & + \frac{\beta \beta_c}{2} \boldsymbol{e}_d^2 
            - \frac{ \left\| \frac{1}{m\mu_h} (\boldsymbol{v} - \boldsymbol{v}_s)^{\top} \right\|^2 }
                { 2\beta(2\beta_e - \beta_c) } \\[1ex]
            \geq\; & \Gamma 
            - \frac{1}{m\mu_h} (\boldsymbol{v} - \boldsymbol{v}_s)^{\top} \boldsymbol{u} + \frac{\beta \beta_c}{2} \boldsymbol{e}_d^2 \\
            & - \frac{ \left\| \frac{1}{m\mu_h} (\boldsymbol{v} - \boldsymbol{v}_s)^{\top} \right\|^2 }
                { 2\beta(2\beta_e - \beta_c) }
        \end{aligned}
    \end{equation}

    where $\Gamma = (\nabla \widetilde{f}_{sdf} + \boldsymbol{e}_2)\boldsymbol{v} - \frac{1}{\mu_h} (\boldsymbol{v} - \boldsymbol{v}_s)^{\top} (-\boldsymbol{C}_1 \boldsymbol{v}  -\boldsymbol{C}_2 \boldsymbol{r} + \boldsymbol{g} - \frac{\partial \boldsymbol{v}_s}{\partial \boldsymbol{r}}\boldsymbol{v})$.

    With the CBF constraint $\dot{h}_1 \geq -\beta_c h_1$, we have
    \begin{equation}
        \label{backstepping_CBF_constraint_1}
        \begin{aligned}
            &\Gamma -\frac{1}{m\mu_h} (\boldsymbol{v} - \boldsymbol{v}_s)^{\top} \boldsymbol{u} - \frac{ \|\frac{1}{m\mu_h} (\boldsymbol{v} - \boldsymbol{v}_s)^{\top}\|^2}{2\beta(2\beta_e - \beta_c)} \\\
            &\geq 
            -\beta_c (\widetilde{f}_{sdf} + e_1 - \frac{1}{2\mu_h} (\boldsymbol{v} - \boldsymbol{v}_s)^{\top} (\boldsymbol{v} - \boldsymbol{v}_s))
        \end{aligned}
    \end{equation}

    For any fixed $\boldsymbol{r}$, $\boldsymbol{v}$ and any errors $e_1$ and $\boldsymbol{e}_2$ satisfying Eq.(\ref{estimation_error}), we need the minimum value of the left-hand side greater than the maximum value of the right-hand side to ensure that Eq.(\ref{backstepping_CBF_constraint_1}) still holds, namely:
    \begin{equation}
        \begin{aligned}
            &\min_{|e_1| \leq {e}_h} \left\{ {\Gamma - \frac{\|\frac{1}{m\mu_h} (\boldsymbol{v} - \boldsymbol{v}_s)^{\top}\|^2}{2\beta(2\beta_e - \beta_c)}}\right\} \\ 
            = &\nabla {\widetilde{f}_{sdf}(\boldsymbol{r})}  \boldsymbol{v} -  \|\boldsymbol{v}\| e_{\nabla h} \\
            &- \frac{1}{\mu_h} (\boldsymbol{v} - \boldsymbol{v}_s)^{\top} (-\boldsymbol{C}_1 \boldsymbol{v}  -\boldsymbol{C}_2 \boldsymbol{r} + \boldsymbol{g} - \frac{\partial \boldsymbol{v}_s}{\partial \boldsymbol{r}}(\boldsymbol{v})),
        \end{aligned}
    \end{equation}
    \begin{equation}
        \begin{aligned}
            &\max_{\|\boldsymbol{e}_2\| \leq {e}_{\nabla h}} \left\{ -\beta_c (\widetilde{f}_{sdf} + e_1 - \frac{1}{2\mu_h} (\boldsymbol{v} - \boldsymbol{v}_s)^{\top} (\boldsymbol{v} - \boldsymbol{v}_s))\right\} \\
            =
            & -\beta_c (\widetilde{f}_{sdf} - e_h - \frac{1}{2\mu_h} (\boldsymbol{v} - \boldsymbol{v}_s)^{\top} (\boldsymbol{v} - \boldsymbol{v}_s)).
        \end{aligned}
    \end{equation}

    Therefore, the CBF constraint Eq.(\ref{backstepping_CBF_constraint_1}) can be rewritten as:
    \begin{equation}
        \label{backstepping_CBF_constraint2}
        \begin{aligned}
            &\frac{1}{m\mu_h} (\boldsymbol{v} - \boldsymbol{v}_s)^{\top} {\boldsymbol{u}} \\
            \leq 
            &\nabla \widetilde{f}_{sdf}\boldsymbol{v}  - e_{\nabla h} \| \boldsymbol{v}\| \\
            &+ \frac{1}{\mu_h} (\boldsymbol{v} - \boldsymbol{v}_s)^{\top} (\boldsymbol{C}_1 \boldsymbol{v} + \boldsymbol{C}_2 \boldsymbol{r} - \boldsymbol{g} + \frac{\partial \boldsymbol{v}_s}{\partial \boldsymbol{r}}\boldsymbol{v}) \\
            & + \beta_c (\widetilde{f}_{sdf} - e_{h} - \frac{1}{2\mu_h} (\boldsymbol{v} - \boldsymbol{v}_s)^{\top} (\boldsymbol{v} - \boldsymbol{v}_s))  \\
            &- \frac{\|\frac{1}{m\mu_h} (\boldsymbol{v} - \boldsymbol{v}_s)^{\top}\|^2}{2\beta(2\beta_e - \beta_c)}.
        \end{aligned}
    \end{equation}

    For the QP problem with a single constraint, the analytical solution can be obtained using smooth safety filters \cite{cohen2023characterizing}. Therefore, the form in Eq.(\ref{SmoothSafetyFiltersParam}) can be obtained through Eq.(\ref{backstepping_CBF_constraint2}). 
\end{proof}

However, for the dynamical system described by Eq.(\ref{relative_orbital_dynamics}), it is as anticipated that the composite controller $\boldsymbol{F}$, derived after the application of the safety velocity filter and safety control modifications, cannot fully guarantee system convergence. This stems from the potential conflict between safety and stability requirements. For instance, when the reference velocity $\boldsymbol{v}_s$ drives the chaser towards an unsafe region, the safety velocity filter yielding $\boldsymbol{v}_s$ and the safe robust controller $\boldsymbol{F}$ will prioritize system safety to avoid obstacles, inevitably compromising stability and convergence rate.

On the other hand, the safety velocity filter and safe robust controller can be interpreted as modifications to the reference velocity and reference controller, i.e., $\boldsymbol{v}_s = \boldsymbol{v}_c + \delta_v$ and $\boldsymbol{F} = \boldsymbol{F}_{ref} + \delta_{F}$. When the trajectory of the system approaches the boundary of the safe region, $\delta_{v}$ and $\delta_{F}$ are invoked to ensure safety. Conversely, when the trajectory is sufficiently far from the boundary, $\delta_{v}$ and $\delta_{F}$  asymptotically converge to zero to preserve stability and convergence. Therefore, for the purpose of stability analysis, we can assume that the reference velocity and reference control are safe, i.e., $\boldsymbol{v}_s = \boldsymbol{v}_c$ and $\boldsymbol{F} = \boldsymbol{F}_{ref}$ thereby ensuring system convergence.

\begin{theorem}
\label{stability_backstepping}
The composite controller Eq.(\ref{composite_controller}) will stabilize the system Eq.(\ref{relative_orbital_dynamics}) to the desired position $\boldsymbol{r}_d$ in the presence of disturbance $\boldsymbol{d}$ when the the safety velocity $\boldsymbol{v}_s$ is ultimately equal to $\boldsymbol{v}_c$
\end{theorem}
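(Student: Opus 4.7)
The plan is a backstepping-style Lyapunov analysis under the working hypothesis $\boldsymbol{v}_s = \boldsymbol{v}_c$ and $\boldsymbol{F} = \boldsymbol{F}_{ref}$ stated just before the theorem. First I would define the velocity tracking error $\boldsymbol{v}_e := \boldsymbol{v} - \boldsymbol{v}_c$ and substitute the composite controller Eq.(\ref{composite_controller}) into Eq.(\ref{dynamics_system}). The feedforward structure of $\boldsymbol{u}_{ref}$ is designed precisely so that $\boldsymbol{C}_1\boldsymbol{v}$, $\boldsymbol{C}_2\boldsymbol{r}$, $\boldsymbol{g}$ and the state-dependent transport term $\tfrac{\partial \boldsymbol{v}_s}{\partial \boldsymbol{r}}\boldsymbol{v}$ (which coincides with $\dot{\boldsymbol{v}}_c$) all cancel, leaving the clean closed-loop error dynamics
\begin{equation}
\dot{\boldsymbol{v}}_e = -\mu_v \boldsymbol{H}_0 \boldsymbol{r}_e - \frac{\lambda}{2}\boldsymbol{v}_e + \frac{1}{m}\boldsymbol{e}_d,
\end{equation}
a PD-type loop perturbed only by the disturbance estimation error $\boldsymbol{e}_d$.

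Next I would propose the composite Lyapunov candidate $V = \mu_v V_0 + \tfrac{1}{2}\|\boldsymbol{v}_e\|^2 + \gamma V_{\boldsymbol{e}_d}$ with $\gamma > 0$ to be fixed later. Differentiating along trajectories, the cross terms $\mu_v \boldsymbol{r}_e^{\top}\boldsymbol{H}_0 \boldsymbol{v}_e$ (from $\mu_v\dot V_0$) and $-\mu_v \boldsymbol{v}_e^{\top}\boldsymbol{H}_0\boldsymbol{r}_e$ (from $\boldsymbol{v}_e^{\top}\dot{\boldsymbol{v}}_e$) cancel identically by symmetry of $\boldsymbol{H}_0$. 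Invoking Lemma \ref{lemma_disturbance_observer} and the definition $\boldsymbol{v}_c = -v_{\max}\mathrm{Tanh}(\boldsymbol{r}_e/k_p)$, I obtain
\begin{equation}
\dot V = -\mu_v v_{\max} \boldsymbol{r}_e^{\top}\boldsymbol{H}_0 \mathrm{Tanh}(\boldsymbol{r}_e/k_p) - \frac{\lambda}{2}\|\boldsymbol{v}_e\|^2 + \frac{1}{m}\boldsymbol{v}_e^{\top}\boldsymbol{e}_d - 2\gamma\beta_e V_{\boldsymbol{e}_d}.
\end{equation}
The leading term is negative definite in $\boldsymbol{r}_e$, since $\boldsymbol{H}_0 \succ 0$ and the componentwise product $r_{e,i}\tanh(r_{e,i}/k_p)$ is non-negative and vanishes only at $\boldsymbol{r}_e=\boldsymbol{0}$.

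Then I would dissipate the coupling term through Young's inequality, $\tfrac{1}{m}\boldsymbol{v}_e^{\top}\boldsymbol{e}_d \leq \tfrac{\lambda}{4}\|\boldsymbol{v}_e\|^2 + \tfrac{1}{m^2\lambda}\|\boldsymbol{e}_d\|^2$, and select $\gamma \geq \tfrac{1}{m^2\lambda\beta_e}$ so that the residual $\|\boldsymbol{e}_d\|^2$ coefficient is non-positive. This leaves
\begin{equation}
\dot V \leq -\mu_v v_{\max} \boldsymbol{r}_e^{\top}\boldsymbol{H}_0 \mathrm{Tanh}(\boldsymbol{r}_e/k_p) - \frac{\lambda}{4}\|\boldsymbol{v}_e\|^2,
\end{equation}
which is negative semidefinite globally. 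Invoking LaSalle's invariance principle, the largest invariant subset of $\{\dot V = 0\}$ is $\{\boldsymbol{r}_e = \boldsymbol{0},\ \boldsymbol{v}_e = \boldsymbol{0}\}$, since on this set $\boldsymbol{v} = \boldsymbol{v}_c(\boldsymbol{0}) = \boldsymbol{0}$ and the trajectory cannot escape; combined with the exponential decay $\boldsymbol{e}_d \to \boldsymbol{0}$ guaranteed by Lemma \ref{lemma_disturbance_observer}, I conclude $\boldsymbol{r} \to \boldsymbol{r}_d$ asymptotically in the presence of the disturbance $\boldsymbol{d}$.

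The main obstacle is the algebraic bookkeeping in the first step: verifying that every nonlinear feedforward term in $\boldsymbol{u}_{ref}$ exactly cancels its counterpart in the drift, particularly the state-dependent $\tfrac{\partial \boldsymbol{v}_s}{\partial \boldsymbol{r}}\boldsymbol{v}$ contribution whose match to $\dot{\boldsymbol{v}}_c$ is what produces the clean PD-form error dynamics. A secondary subtlety is that the theorem is stated under the conditional hypothesis $\boldsymbol{v}_s = \boldsymbol{v}_c$, so the argument implicitly treats the safety-filter corrections $\delta_v$ and $\delta_F$ as zero; this is justified by the preceding remark that these corrections vanish in the interior of the safe set, but the restriction should be stated explicitly so that the conclusion is not misread as an unconditional global stability result.
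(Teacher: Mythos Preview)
Your argument is correct and rests on the same backstepping Lyapunov candidate as the paper: the paper uses $V_1 = V_0 + \tfrac{1}{2\mu_v}\|\boldsymbol{v}-\boldsymbol{v}_s\|^2$, which is just your $V$ divided by $\mu_v$ and without the augmenting term $\gamma V_{\boldsymbol{e}_d}$. The cancellation of the feedforward terms and the identification of the residual $\boldsymbol{r}_e^{\top}\boldsymbol{H}_0\boldsymbol{v}_c$ term are identical in both proofs.

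Where you differ is in how the disturbance estimation error is handled. The paper stops at showing $L_fV_1+L_gV_1\boldsymbol{u}_{ref}\le -\sigma_V V_1$ for the disturbance-free loop and then invokes the ISS-CLF property (Definition~\ref{ISS_CLF}) together with the exponential decay of $\boldsymbol{e}_d$ from Lemma~\ref{lemma_disturbance_observer} to conclude stabilization. You instead fold $V_{\boldsymbol{e}_d}$ directly into the Lyapunov function, absorb the cross term $\tfrac{1}{m}\boldsymbol{v}_e^{\top}\boldsymbol{e}_d$ via Young's inequality, and finish with LaSalle. Your route is more self-contained and, in one respect, more honest: the paper asserts a global bound $-v_{\max}\boldsymbol{r}_e^{\top}\boldsymbol{H}_0\mathrm{Tanh}(\boldsymbol{r}_e/k_p)\le -\sigma_1 V_0$ to obtain exponential decay, but this cannot hold uniformly because $\mathrm{Tanh}$ saturates while $V_0$ is quadratic; your asymptotic conclusion via LaSalle avoids that overreach. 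One small caveat on your side: the claim that $\boldsymbol{r}_e^{\top}\boldsymbol{H}_0\mathrm{Tanh}(\boldsymbol{r}_e/k_p)>0$ for $\boldsymbol{r}_e\neq\boldsymbol{0}$ follows from the componentwise sign argument only when $\boldsymbol{H}_0$ is diagonal; for a general symmetric positive-definite $\boldsymbol{H}_0$ this needs a separate justification (or simply a stated restriction to diagonal $\boldsymbol{H}_0$, which is the typical design choice).
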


\begin{proof}
    As Definition \ref{ISS_CLF}, Let $V_1$ denote an ISS-CLF for system Eq.(\ref{relative_orbital_dynamics}). Any nominal control input $\boldsymbol{u}_{ref}$ satisfying the constraint
    \begin{equation}
        \label{input_constraint}
        L_fV_1 + L_gV_1 \boldsymbol{u}_{ref} \leq -\sigma_V V_1
    \end{equation}
    can guarantee the input-to-state stability of system Eq.(\ref{relative_orbital_dynamics}) in the presence of estimation error $\boldsymbol{e}_d$, which means that the composite controller Eq.(\ref{composite_controller}) can exponentially stabilize the nonlinear system since the estimation error $\boldsymbol{e}_d$ is exponentially stable.
    The $V_1$ is defined as follows:
    \begin{equation}
        V_1 = V_0 + \frac{1}{2\mu_v} (\boldsymbol{v} - \boldsymbol{v}_s)^{\top} (\boldsymbol{v} - \boldsymbol{v}_s).
    \end{equation}
    With $V_0$ defined in Eq.(\ref{labelV0}), $\dot{V_1}$ can be expressed as:
    \begin{equation}
        \begin{aligned}
            &\dot{V_0} + \frac{1}{\mu_v} (\boldsymbol{v} - \boldsymbol{v}_s)^{\top} (-\boldsymbol{C}_1 \boldsymbol{v}  -\boldsymbol{C}_2 \boldsymbol{r} + \boldsymbol{g} + \frac{1}{m} \boldsymbol{u}_{ref} - \frac{\partial \boldsymbol{v}_s }{\partial \boldsymbol{r}} \boldsymbol{v}) \\
            &= \frac{\partial V_0}{\partial \boldsymbol{r}}\boldsymbol{v} + \frac{1}{\mu_v}( -\mu_v \frac{\partial V_0}{\partial \boldsymbol{r}} - \frac{\lambda}{2}(\boldsymbol{v} - \boldsymbol{v}_s)) \\
            &= \frac{\partial V_0}{\partial \boldsymbol{r}}\boldsymbol{v}_s - \frac{\lambda}{2 \mu_v}(\boldsymbol{v} - \boldsymbol{v}_s)^{\top} (\boldsymbol{v} - \boldsymbol{v}_s)\\
            &= (\boldsymbol{r} - \boldsymbol{r}_d )^{\top} \boldsymbol{H}_0 \boldsymbol{v}_s - \frac{\lambda}{2 \mu_v}(\boldsymbol{v} - \boldsymbol{v}_s)^{\top} (\boldsymbol{v} - \boldsymbol{v}_s).
        \end{aligned}
        \label{input_constraint2}
    \end{equation}
    Since $\boldsymbol{v}_s$ is ultimately equal to $\boldsymbol{v}_c$, taking $\boldsymbol{v}_c$ into Eq.(\ref{input_constraint2}), we obtain:
    \begin{equation}
        \begin{aligned}
            \dot{V_1} &= -v_{\max} \boldsymbol{r}_e^{\top} \boldsymbol{H}_0 \boldsymbol{r}_e - \frac{\lambda}{2 \mu_v}(\boldsymbol{v} - \boldsymbol{v}_s)^{\top} (\boldsymbol{v} - \boldsymbol{v}_s) \\
            &\leq -\sigma_1 V_0 - \sigma_2 \frac{1}{2\mu_v} (\boldsymbol{v} - \boldsymbol{v}_s)^{\top} (\boldsymbol{v} - \boldsymbol{v}_s) \\
            &\leq -\min(\sigma_1,\sigma_2)V_1 \\
            & = -\sigma_V V_1.
        \end{aligned}
    \end{equation}
    
    Since $\sigma_1$ and $\sigma_2$ are positive constants. we obtain $\dot{V_1} \leq -\sigma_V V_1$, from which it can be concluded that the composite controller given in Eq.(\ref{composite_controller}) can guarantee the input-to-state stability of system Eq.(\ref{relative_orbital_dynamics}) in the presence of disturbance $\boldsymbol{d}$. 
\end{proof}

\section{SIMULATION RESULTS AND ANALYSIS}
\label{simulation}
In this section, the numerical simulation is conducted to illustrate the effectiveness of the proposed control strategy.

\subsection{Simulation Settings}
\subsubsection{Network Architecture and Training Implementation}
For representing target spacecraft model we used level sets of MLP with 10 layers, each contains 512 hidden units. Set loss parameters $\kappa = 2$, $\eta = 0.1$. The training data set samples 50k points, using ADAM optimizer \cite{adam2014method} for 10k iterations with a adjust learning rate strategy with initial learning rate 0.005, and decay rate 0.5 every 2000 iterations. 

\subsubsection{Simulation Parameters}
The parameters of the chaser are listed as follows: $m = 20$kg, $v_{max} = 0.1$m/s, $\boldsymbol{F}_{max} = 0.1$N.
The control strategy parameters are selected as: $p = 1$, $\alpha_0(\boldsymbol{r}) = 0.08\boldsymbol{r}$, $\Omega = \begin{bmatrix}    0 & 0 & 1 \\    -1 & 0 & 0 \\    0 & 0 & 0\end{bmatrix}$, $\Upsilon(x) = 0.1 - x$ and DO parameters are set as:$\boldsymbol{A} = I_{3\times3}$, $\boldsymbol{C} = I_{3\times3}$, $\varphi  = [1,1,1]$, $\phi  = I_{3\times3}$, $\boldsymbol{L}  = 50I_{3\times3}$, and $\beta_c = 10$, $\beta_e = 0.1$, $\beta = 1$, $\mu_h = 2$, $\mu_v = 0.0001$, $\lambda = 15$. The external disturbance is set as $\boldsymbol{d} = 0.01 \left[\sin(0.02t), \cos(0.02t), \sin(0.01t) \right] ^{\top}$N, which are large enough compared with the real external disturbance in space. The SOCP problem is solved by using the CVXPY \cite{diamond2016cvxpy,agrawal2018rewriting} in our simulation and the control frequency is set as 1 Hz, with Intel Core i7-12700F, NVIDIA GeForce RTX 2080 SUPER, and 64GB RAM.

The proposed methods are validated through four simulation experiments: (1) comparison of the proposed and traditional target spacecraft modeling approaches; (2) demonstration of CI constraint in avoiding local minima by comparing trajectories with and without it; (3) evaluation of the controller's safety performance under external disturbances; and (4) overall validation of the control framework via Monte Carlo simulations

\subsection{SDF Estimation Results}
To demonstrate the generality and effectiveness of the proposed spacecraft representation method, this section evaluates its applicability using the International Space Station (ISS), the Chinese Space Station (CSS), and the commercial communication satellite InterSat-30 as representative spacecraft. Their respective SDF models are trained and evaluated to assess the applicability of the method. The 3D models of the selected target spacecraft are shown in Fig.(\ref{tISS}), Fig.(\ref{tCSS}), and Fig.(\ref{tintersat}), respectively.

\begin{figure*}[h!]
    \centering
    \subfigure[ISS]{
        \centering
        \includegraphics[scale=0.75]{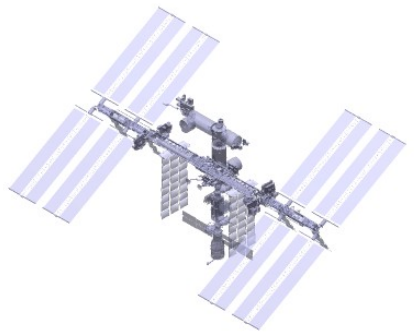}
        \label{tISS}}
    \centering
    \subfigure[CSS]{
        \centering
        \includegraphics[scale=0.75]{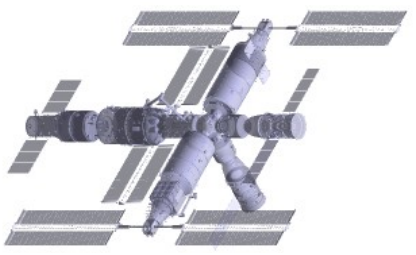}
        \label{tCSS}}
    \centering
    \subfigure[Intersat-30]{
        \centering
        \includegraphics[scale=0.75]{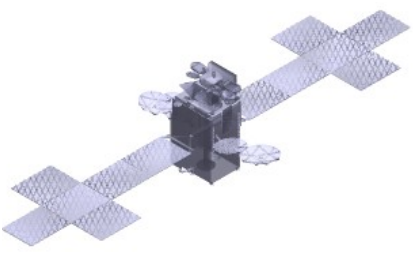}
        \label{tintersat}}
    \caption{Target 3D model}
    \label{target_shape}
\end{figure*} 

We measure SDF approximation error and under-approximation error as:
\begin{equation}
    \begin{aligned}
        &\varepsilon = \frac{1}{N} \sum_{i=1}^{N} |\widetilde{f}_{sdf}(\boldsymbol{y}_i)|\\
        &\varepsilon_+ = \frac{1}{|\mathcal{I}_+|} \sum_{i \in \mathcal{I}_+} \widetilde{f}_{\mathrm{sdf}}(\boldsymbol{y}_i) \\
        &\mathcal{I}_+ = \left\{ i \in \{1, \dots, N\} \mid \widetilde{f}_{\mathrm{sdf}}(\boldsymbol{y}_i) > 0 \right\}
    \end{aligned}
\end{equation}
where $N = 100000$ points uniformly sampled on the surface of the target. 
Table.(\ref{ESTIMATIONERROR}) shows the approximate effect evaluation of the three selected target spacecraft.

\begin{table}[hbt!]
\caption{SDF ESTIMATION ERROR}
\label{ESTIMATIONERROR}
\centering
\begin{tabular*}{20pc}{@{\extracolsep{\fill}}lcccc}
\hline
Targets & $\varepsilon$  & $\varepsilon_+$  & $e_h$ \\
\hline
ISS         & 0.0011 & 0.0011 & 0.019 \\
CSS         & 0.002  & 0.0013 & 0.012 \\
Intersat-30 & 0.0015 & 0.0014 & 0.031 \\
\hline
\end{tabular*}
\end{table}

Using the $Marching Cubes$ meshing algorithm \cite{Marchingcubes} to reconstruct the surface of the target. The neural SDF approximate surface of the selected target spacecraft is shown in Fig.(\ref{ISS}), Fig.(\ref{CSS}), and Fig.(\ref{intersat}).

\begin{figure*}[h]
    \centering
    \subfigure[ISS]{
        \centering
        \includegraphics[scale=0.75]{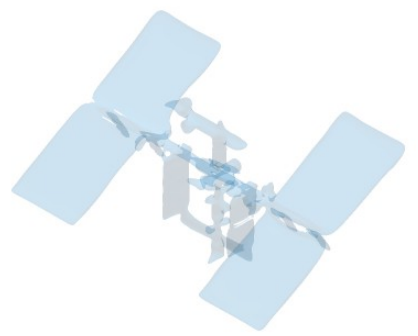}
        \label{ISS}}
    \centering
    \subfigure[CSS]{
        \centering
        \includegraphics[scale=0.75]{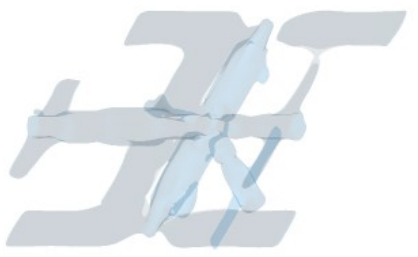}
        \label{CSS}}
    \centering
    \subfigure[Intersat-30]{
        \centering
        \includegraphics[scale=0.75]{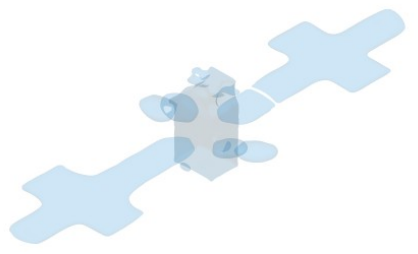}
        \label{intersat}}
    \caption{Estimated SDF}
    \label{estimated SDF}
\end{figure*}


\begin{figure*}[h!]
    \centering
    \includegraphics[scale=0.30]{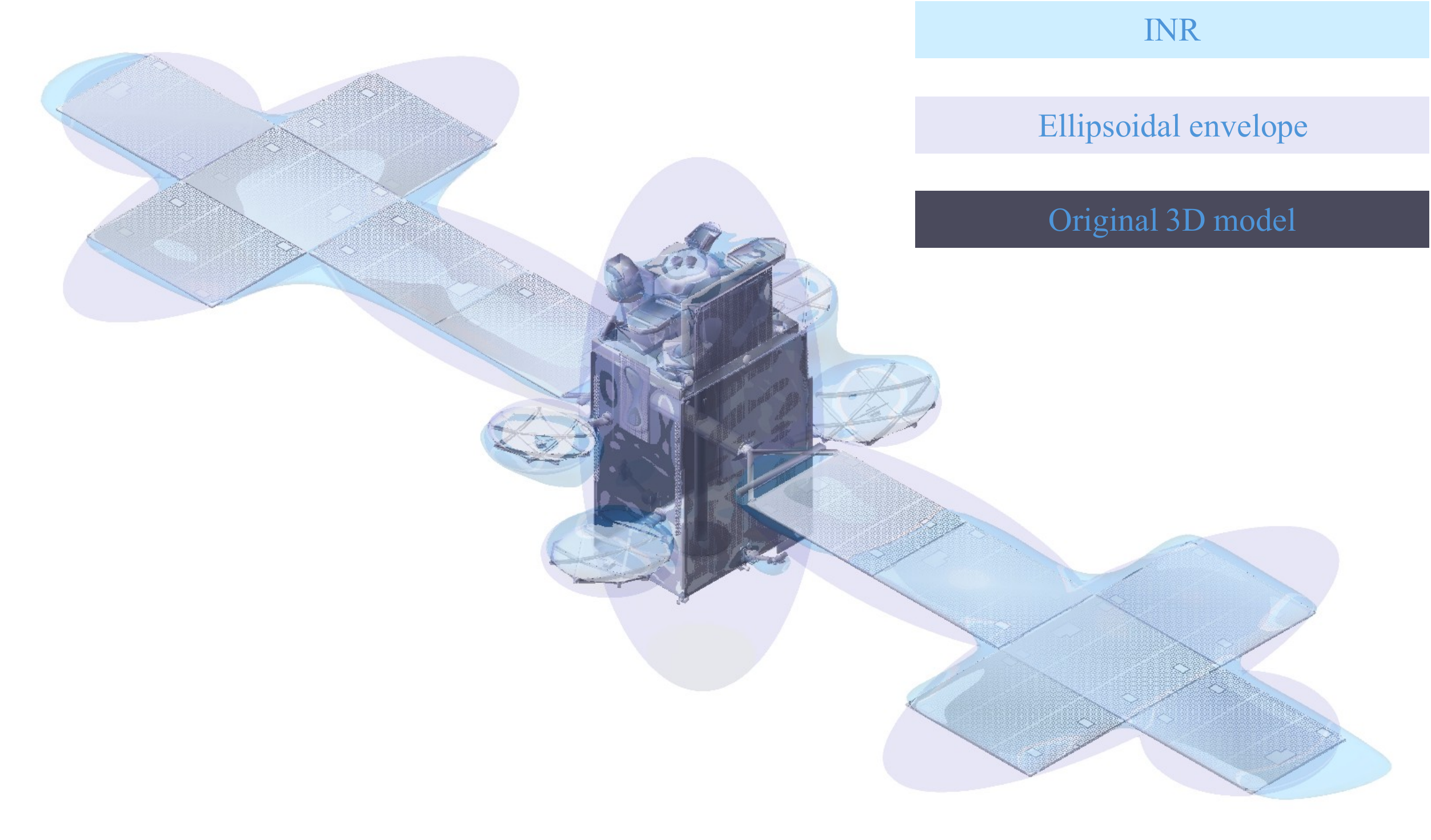}
    \caption{Target representation comparison}
    \label{Estimation_fig_intersat30}
\end{figure*}

As shown in Fig.(\ref{Estimation_fig_intersat30}), the proposed target spacecraft representation is compared with the conventional ellipsoidal envelope method \cite{wang2025safe}. The volumes of the INR and the ellipsoidal approximation are calculated as $91.4\rm{m^3}$ and $270.6\rm{m^3}$, respectively, demonstrating that our method yields a significantly tighter and more precise geometric envelope than the traditional approach.

\subsection{Controller Simulation Results}
In this section, the target spacecraft is selected as the commercial communication satellite InterSat-30 to verify the effectiveness of the proposed controller.
\subsubsection{case 1: Local Minimum Confirmation}
In this case, the goal position is set as $\boldsymbol{r}_d = [0, -10, -4]^{\top} \rm{m}$ and the initial position is set as $\boldsymbol{r}_0 = [0, 10, 6]^{\top} \rm{m}$.

Fig.(\ref{local_minimum_traj}) shows the simulated spacecraft trajectories without and with the proposed CI strategy, respectively. As illustrated in Fig.(\ref{local_minimum_fig1}), the trajectory without CI becomes trapped in a local minimum and fails to reach the target position. In contrast, Fig.(\ref{local_minimum_fig2}) demonstrates that, with CI enabled, the spacecraft successfully reaches the desired target. 
Fig.(\ref{local_minimum_cbf}) presents the CBF values for both scenarios. In both cases, the CBF remains positive, indicating that the spacecraft stays within the safe region. However, without CI, the CBF value gradually approaches zero, implying that the spacecraft remains on the boundary of an obstacle and is trapped in a local minimum. 
Fig.(\ref{local_minimum_re}) shows the position tracking error under both conditions. With CI, the error converges to zero, confirming that the spacecraft successfully reaches the target location successfully. 
Fig.(\ref{local_minimum_vs}) illustrates the safe velocity with and without the CI constraint, while Fig.(\ref{local_minimum_v}) presents the corresponding actual velocity profiles. The safe velocity generated by the proposed CCBF-SOCP method consistently satisfies the velocity constraint. Likewise, the actual velocity also remains within the allowable bounds.

\begin{figure}[hbt!]
    \centering
    \subfigure[without CI]{
        \centering
        \includegraphics[scale=0.3]{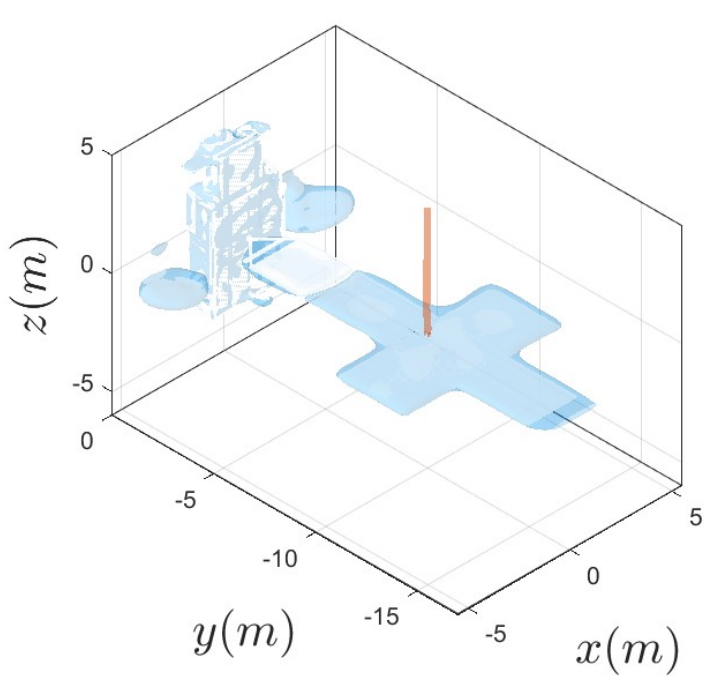}
        \label{local_minimum_fig1}}
    \centering
    \subfigure[with CI]{
        \centering
        \includegraphics[scale=0.3]{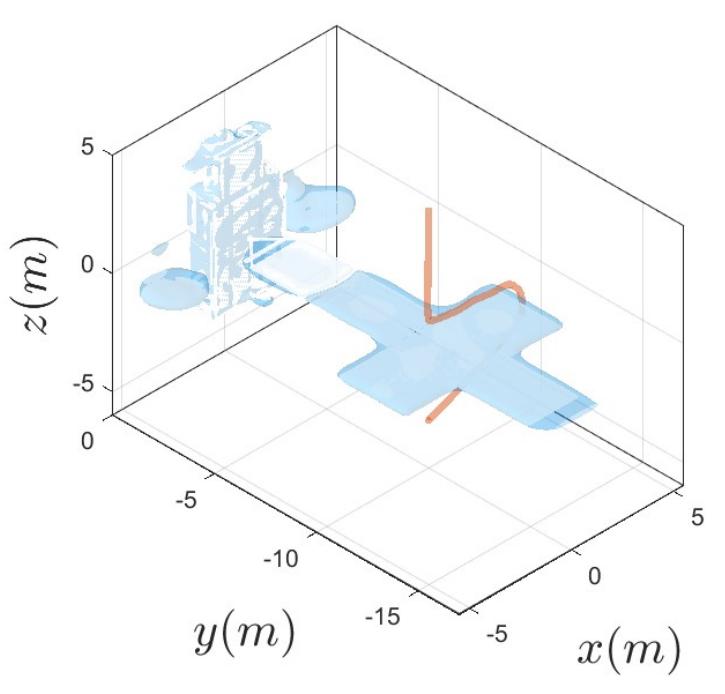}
        \label{local_minimum_fig2}}
    \centering
    \caption{Simulation trajectory with CI or without CI}
    \label{local_minimum_traj}
\end{figure}

\begin{figure}[hbt!]
    \centering
    \includegraphics[scale=0.42]{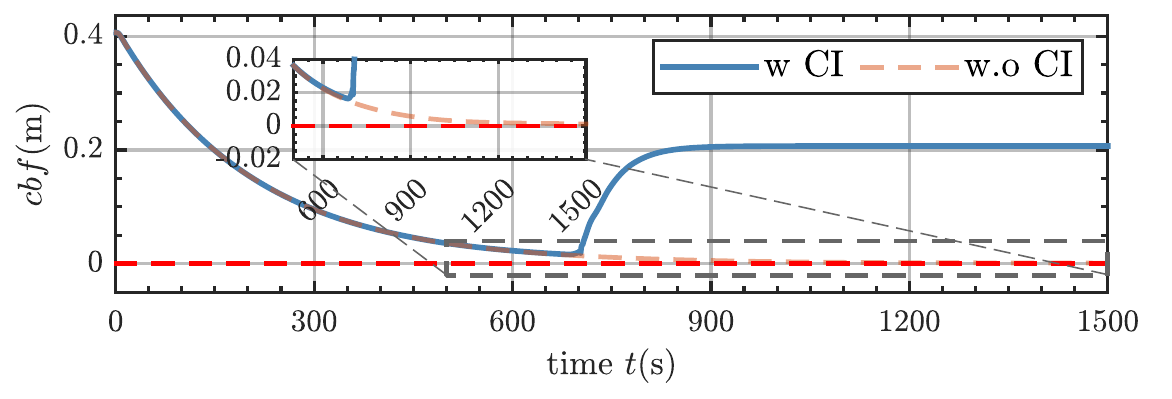}
    \caption{CBF value with or without CI}
    \label{local_minimum_cbf}
\end{figure}

\begin{figure}[hbt!]
    \centering
    \includegraphics[scale=0.42]{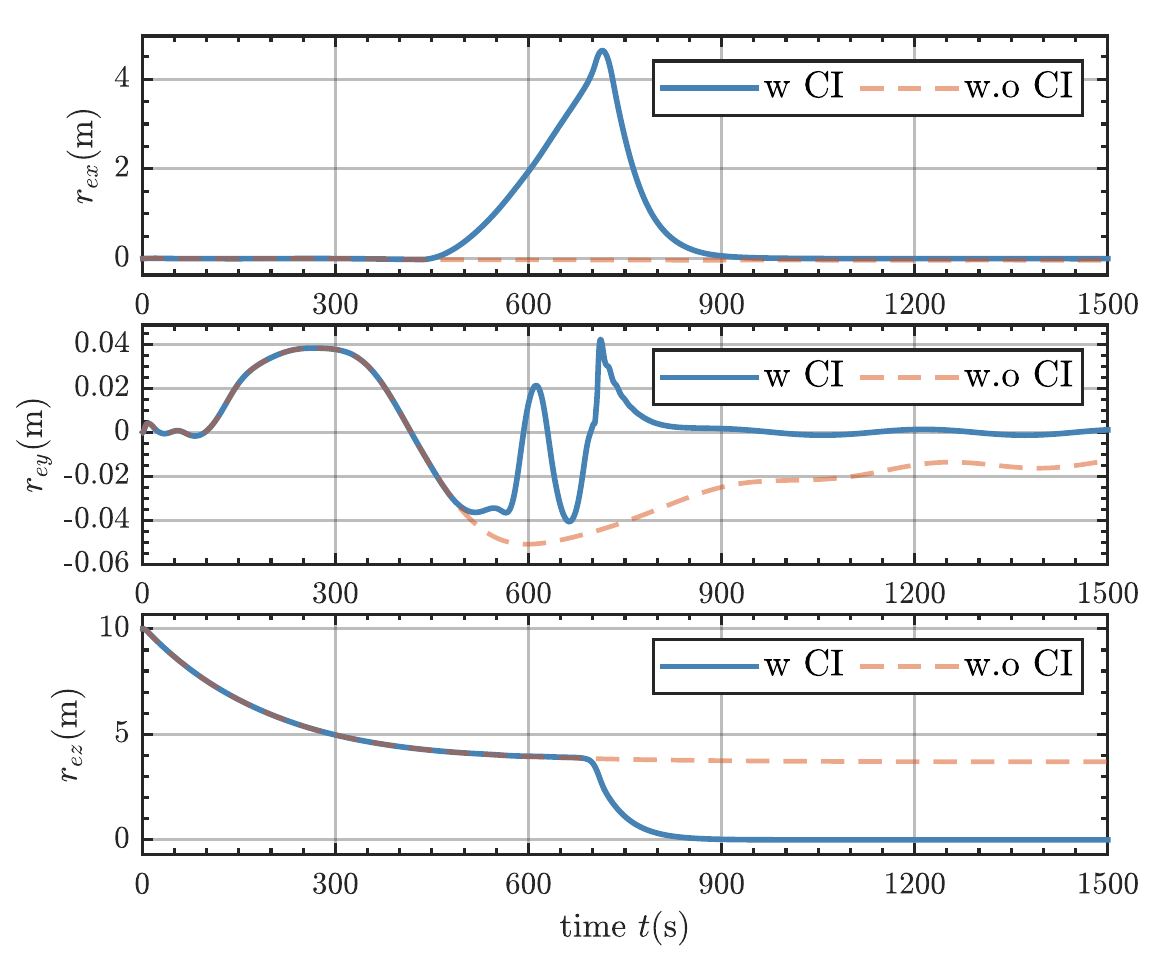}
    \caption{Position error with or without CI}
    \label{local_minimum_re}
\end{figure}

\begin{figure}[hbt!]
    \centering
    \includegraphics[scale=0.42]{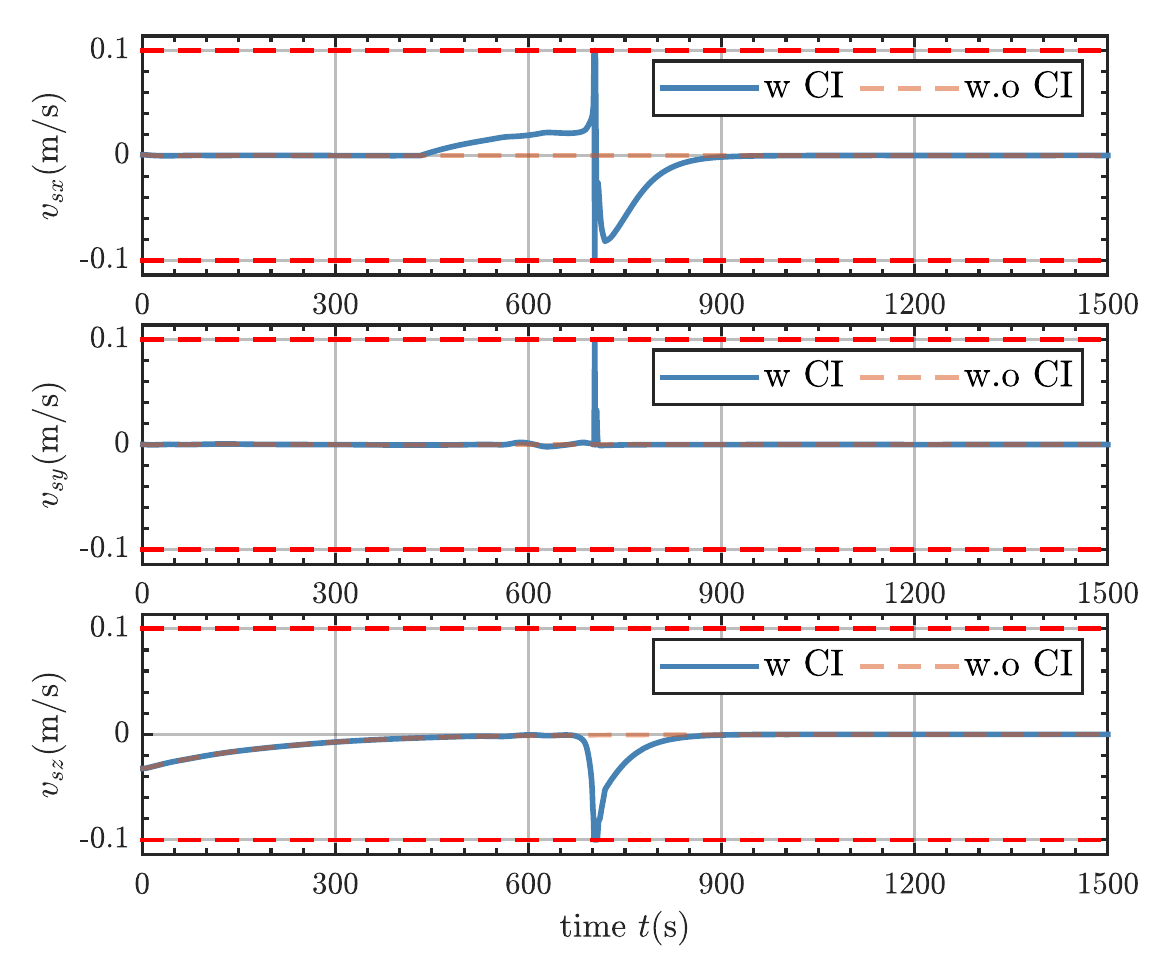}
    \caption{Safe velocity with or without CI}
    \label{local_minimum_vs}
\end{figure}

\begin{figure}[hbt!]
    \centering
    \includegraphics[scale=0.42]{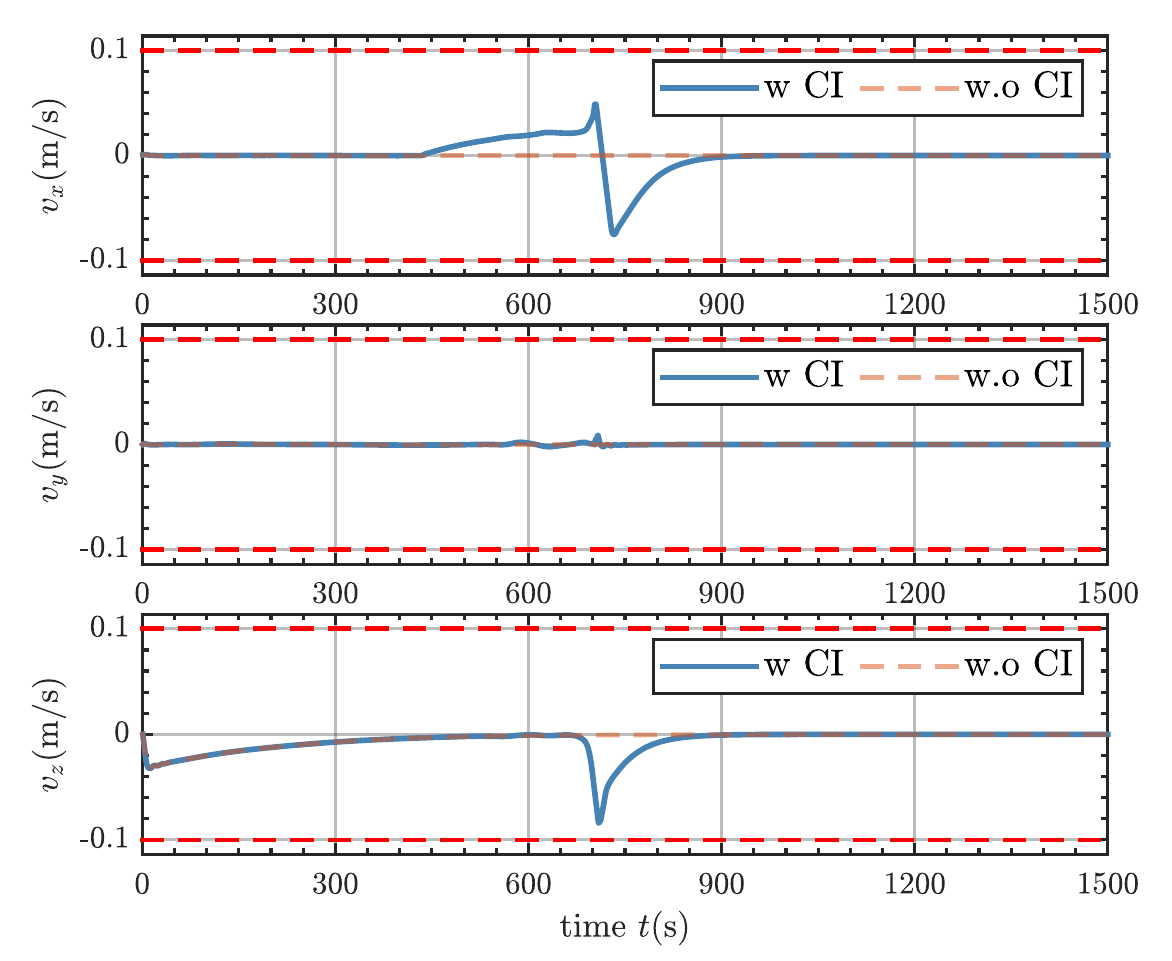}
    \caption{Velocity with or without CI}
    \label{local_minimum_v}
\end{figure}
\subsubsection{case 2: Robustness Confirmation}
This section compares the controller effects with and without DO to illustrate the robustness of DO to the controller under disturbance. 
Fig.(\ref{DOre}) shows the position error with and without DO. The position error with DO converges to zero, indicating that the spacecraft can successfully reach the target position. Fig.(\ref{DO_cbf}) shows the CBF value with and without DO. The CBF value with DO is always positive, indicating that the spacecraft is always in a safe state with external disturbances. Fig.(\ref{DO_clf}) and Fig.(\ref{DO_Fo}) show the CLF value and control input value with and without DO, respectively.

Fig.(\ref{runtime}) presents the computation time per control step of the proposed method, which includes the time required to solve the SCOP problem and to perform inference on the neural SDF model. It can be observed that the computation time remains below 10 ms throughout the simulation. This indicates that, relative to the control period, the solution time is negligible. Therefore, the proposed approach imposes relatively low computational demands and can be readily supported by the onboard processing capabilities of existing spacecraft.

\begin{figure}[hbt!]
    \centering
    \includegraphics[scale=0.42]{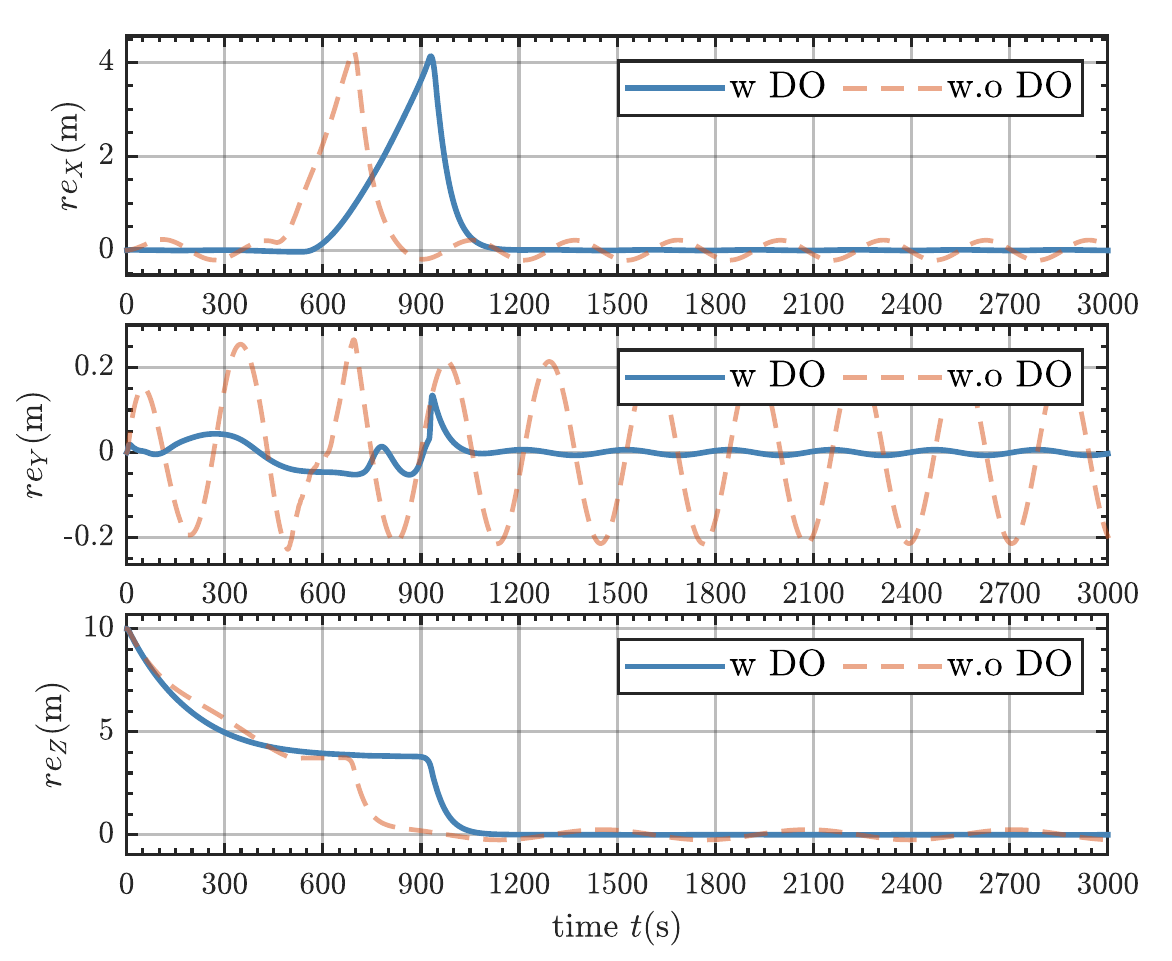}
    \caption{Position error with or without DO}
    \label{DOre}
\end{figure}

\begin{figure}[hbt!]
    \centering
    \includegraphics[scale=0.42]{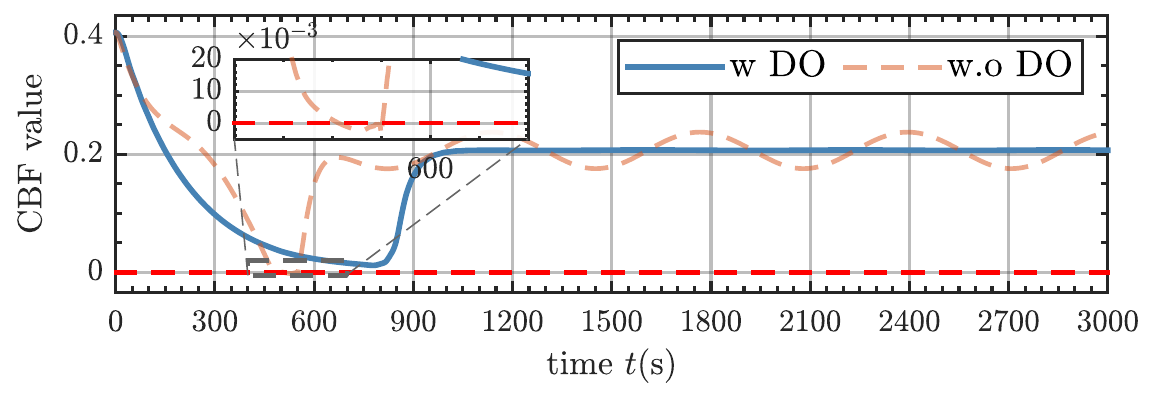}
    \caption{CBF value with or without DO}
    \label{DO_cbf}
\end{figure}

\begin{figure}[hbt!]
    \centering
    \includegraphics[scale=0.42]{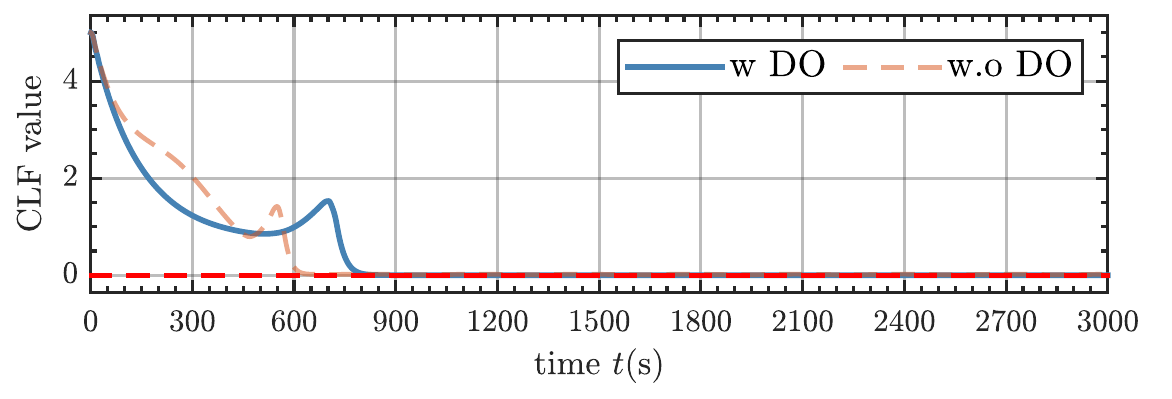}
    \caption{CLF value with or without DO}
    \label{DO_clf}
\end{figure}

\begin{figure}[hbt!]
    \centering
    \includegraphics[scale=0.42]{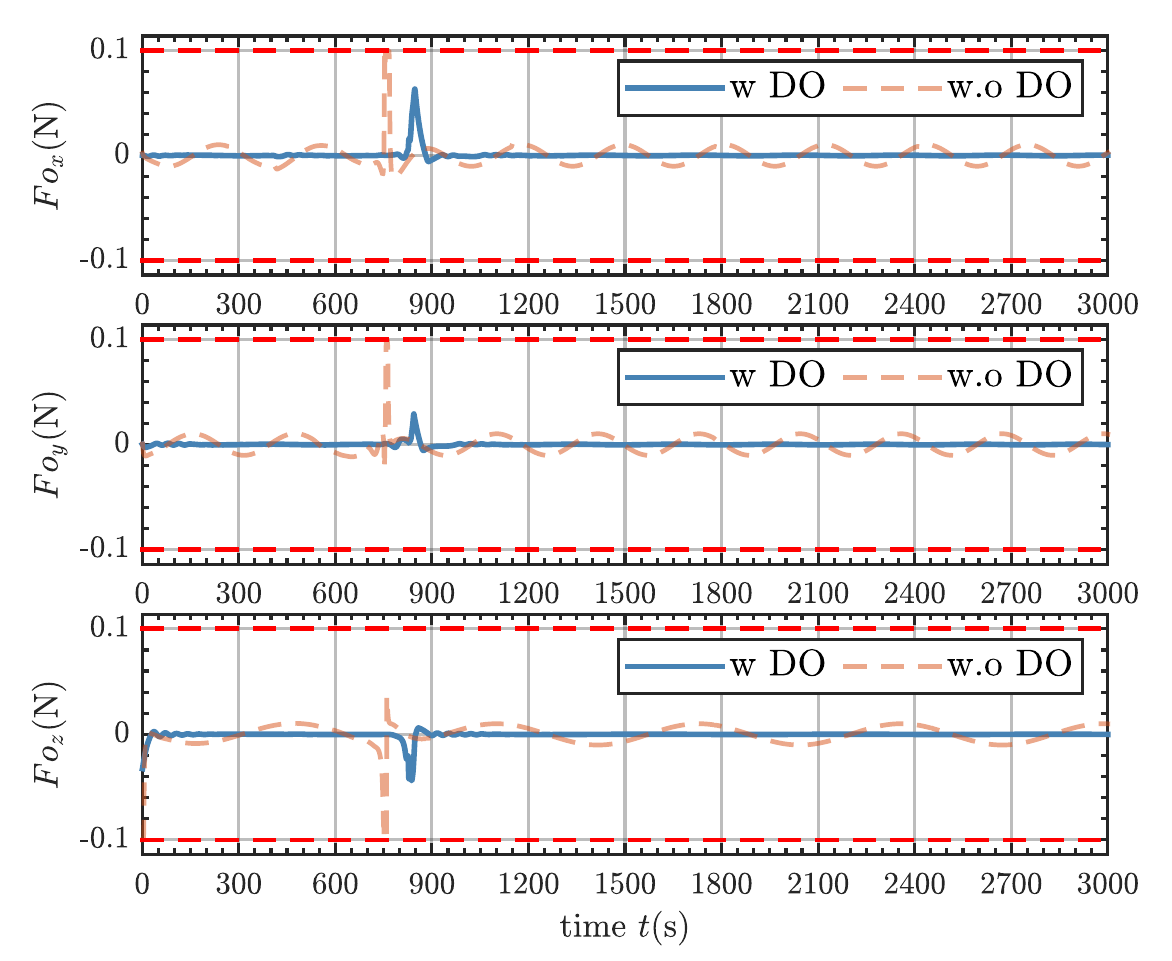}
    \caption{Control input with or without DO}
    \label{DO_Fo}
\end{figure}

\begin{figure}[hbt!]
    \centering
    \includegraphics[scale=0.08]{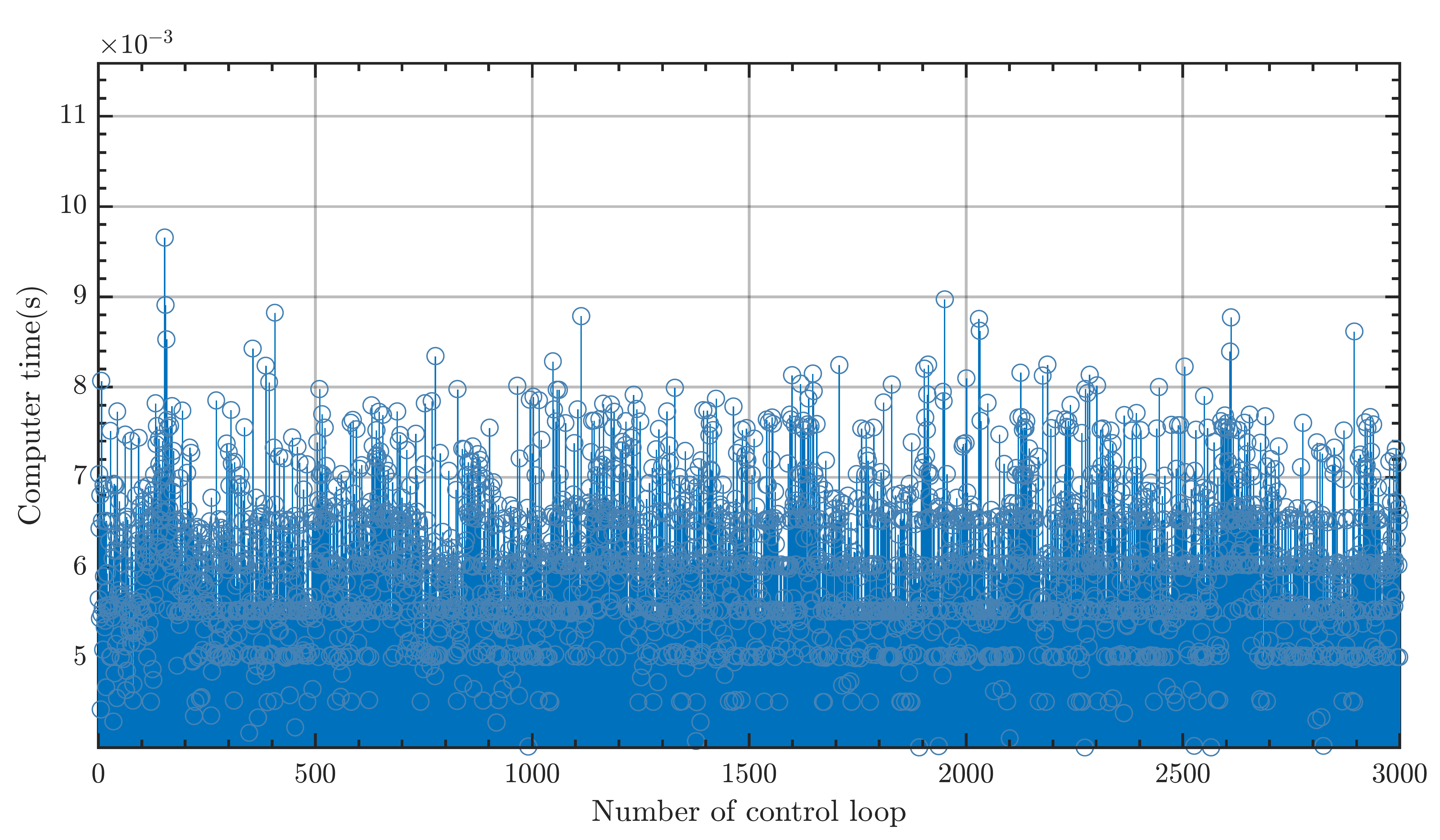}
    \caption{Computing time of the proposed controller}
    \label{runtime}
\end{figure}

\subsubsection{case 3: Monte Carlo Simulation}
This section verifies the safety and stability of the proposed safe control framework, as well as its effectiveness, through Monte Carlo simulations.

A total of 1000 simulations were conducted with randomly sampled initial positions. The initial states were sampled as $\boldsymbol{r}_s = [x, y, z]$. For the Intersat-30 scenario, the sampling ranges were $x \in [-15, 15]\rm{m}$ , $y \in [-5, 5]\rm{m}$, and $z = -4\rm{m}$. The goal position was set to $\boldsymbol{r}_d = [0, -10, 6]^{\top}\rm{m}$, and the simulation time was set to 3000 seconds. For the ISS scenario, the initial position was sampled from $x \in [-60, 60]\rm{m}$, $y \in [-40, 40]\rm{m}$, $z = -30\rm{m}$, and the goal position was set to $\boldsymbol{r}_d = [0, -10, 6]^{\top}\rm{m}$. with the simulation time set to 4000 seconds. 

Fig.(\ref{cbf_w_cir}) and Fig.(\ref{cbf_w_cir_iss}) illustrate the CBF values obtained from all simulations. The CBF values remain strictly positive throughout, indicating that all trajectories satisfy the safety constraints. Fig.(\ref{clf_w_cir}) and Fig.(\ref{clf_w_cir_iss}) show the corresponding CLF values. All CLF values converge to zero, demonstrating the asymptotic stability of the system across all runs. Fig.(\ref{pose_error_w_cir}) and Fig.(\ref{pose_error_w_cir_iss}) present the position tracking errors. All errors converge to zero, confirming that each time the chaser successfully reaches the designated target position. Fig.(\ref{velocity_w_cir}) and Fig.(\ref{velocity_w_cir_iss}) present the actual velocity of the chaser. It can be observed that the velocity may violate the velocity constraint defined in Eq.(\ref{relative_velocity_constraint}). This occurs because the velocity constraint is implemented as a soft constraint, as discussed in Remark \ref{remark_velocity_constraint}. 

\begin{figure}[hbt!]
    \centering
    \subfigure[CBF value]{
        \centering
        \includegraphics[scale=0.3]{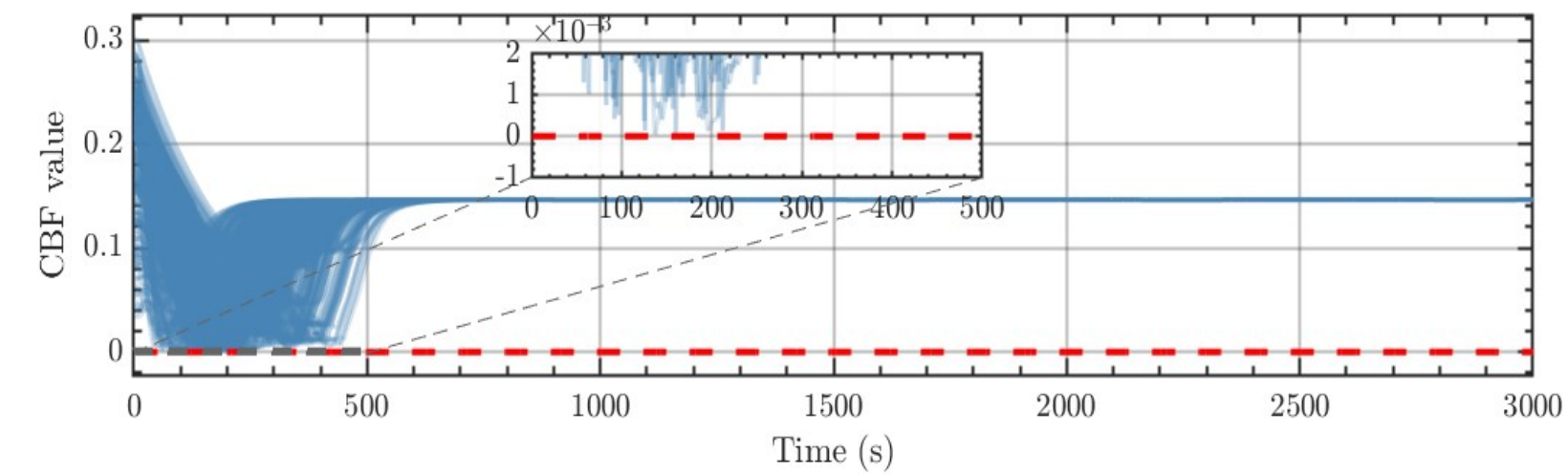}
        \label{cbf_w_cir}}
    \centering
    \subfigure[CLF value]{
        \centering
        \includegraphics[scale=0.3]{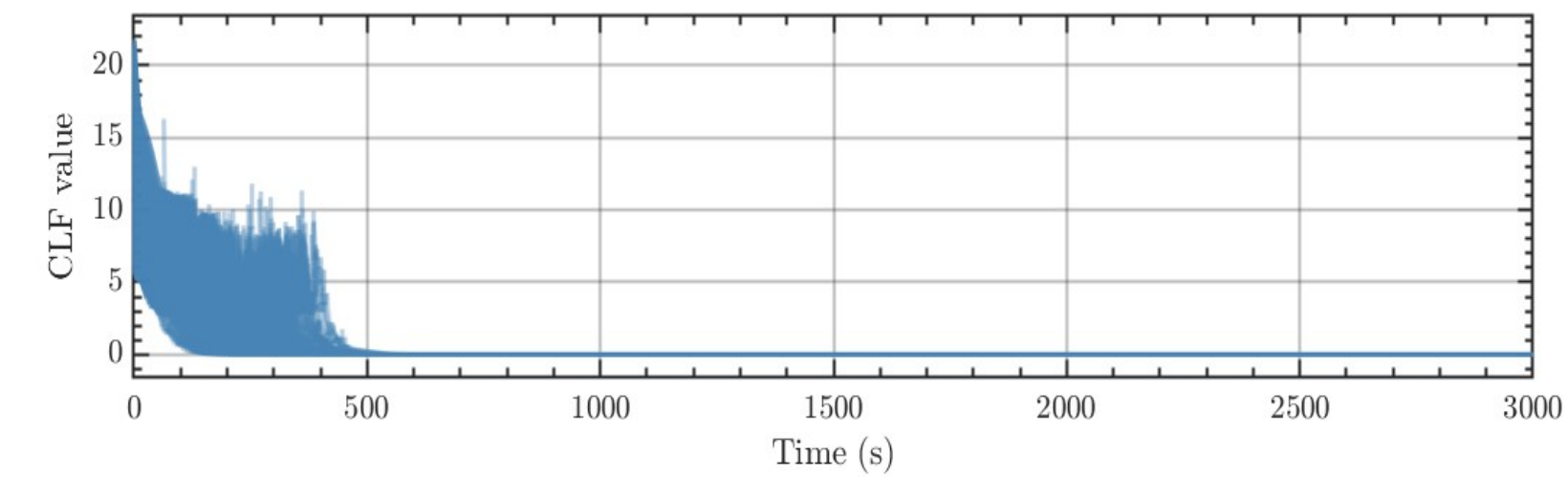}
        \label{clf_w_cir}}
    \centering
    \subfigure[Position error]{
        \centering
        \includegraphics[scale=0.3]{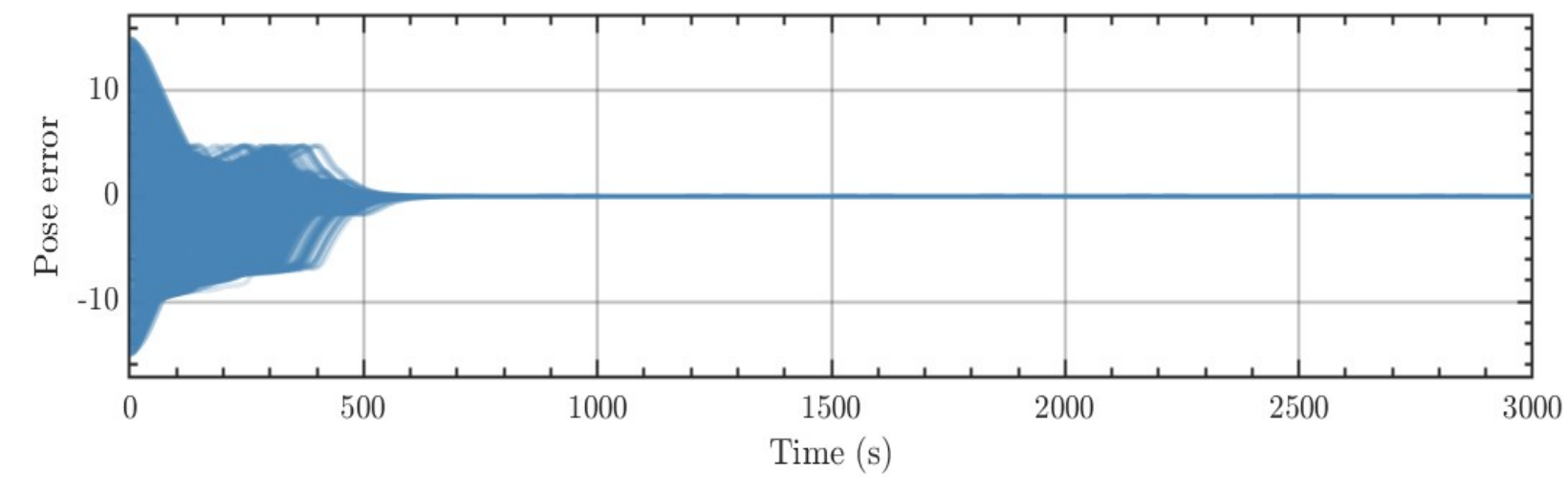}
        \label{pose_error_w_cir}}
    \subfigure[Velocity]{
        \centering
        \includegraphics[scale=0.3]{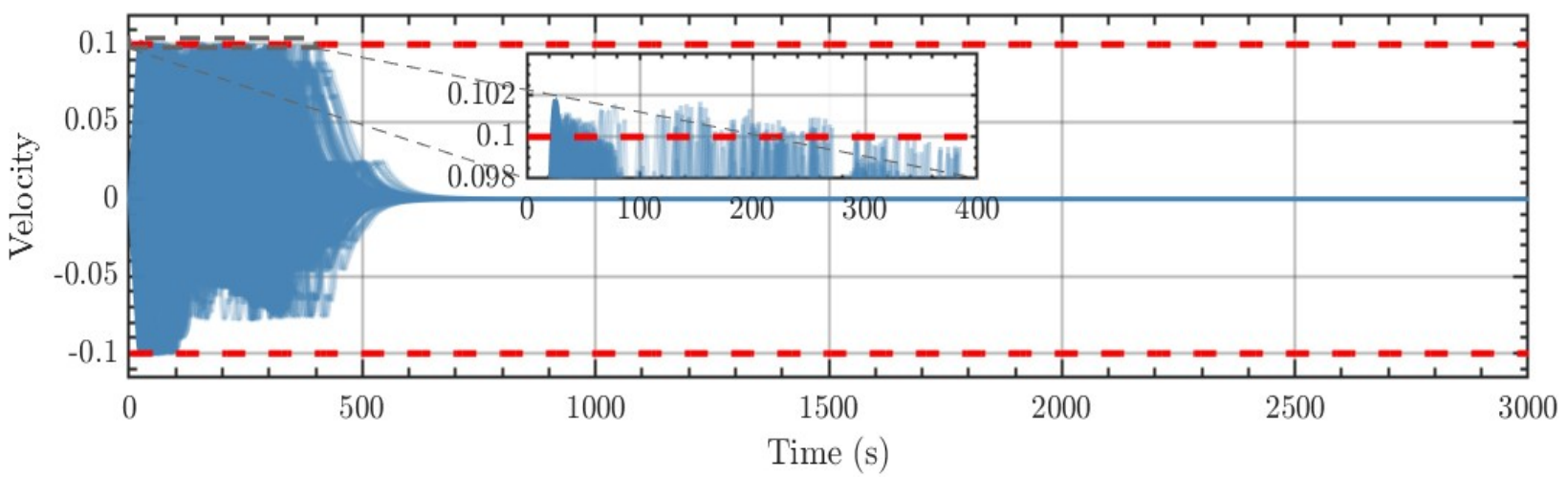}
        \label{velocity_w_cir}}
    \caption{MonteCarlo Simulation of InterSat-30}
    \label{MonteCarloSimulation}
\end{figure}

\begin{figure}[hbt!]
    \centering
    \subfigure[CBF value]{
        \centering
        \includegraphics[scale=0.3]{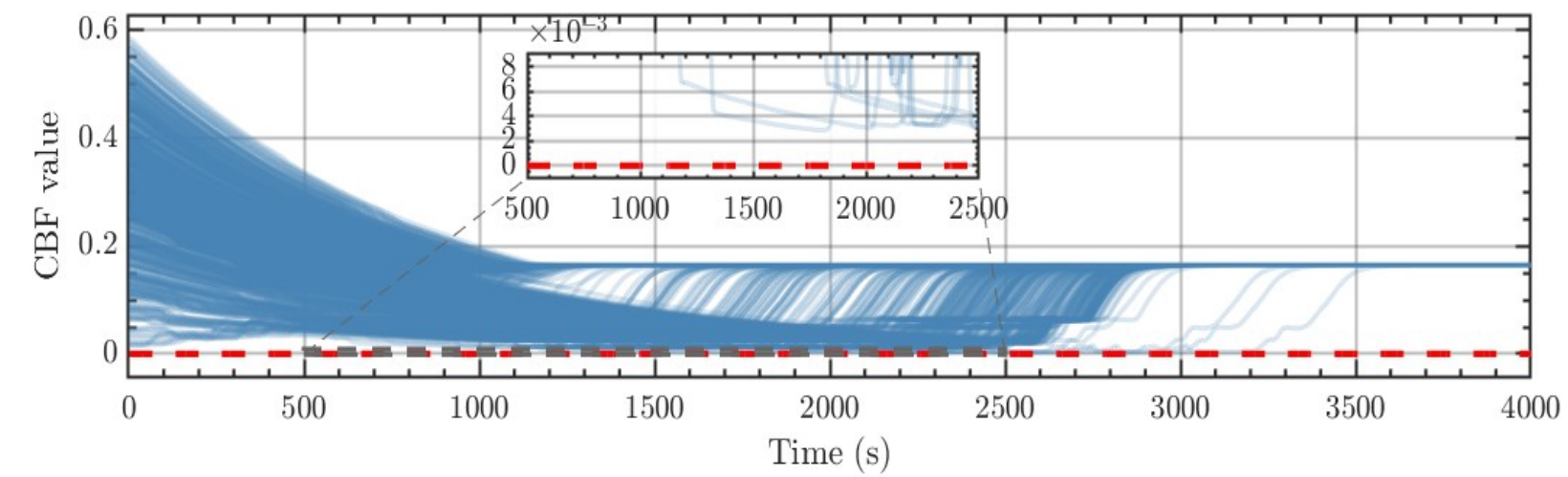}
        \label{cbf_w_cir_iss}}
    \centering
    \subfigure[CLF value]{
        \centering
        \includegraphics[scale=0.3]{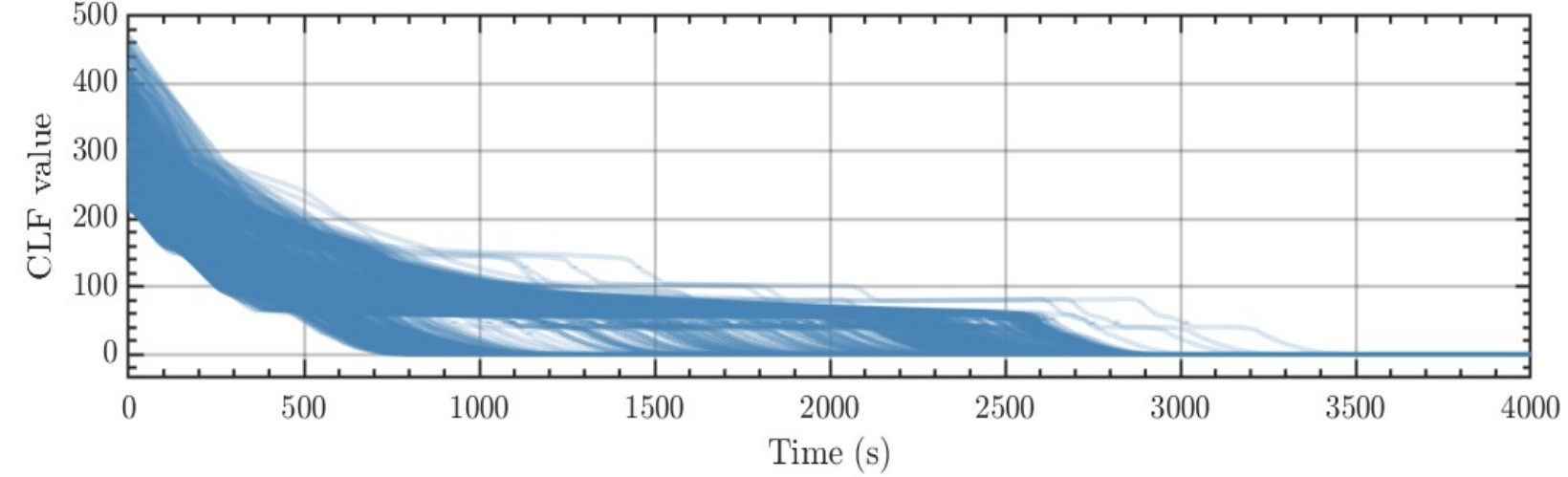}
        \label{clf_w_cir_iss}}
    \centering
    \subfigure[Position error]{
    \centering
    \includegraphics[scale=0.3]{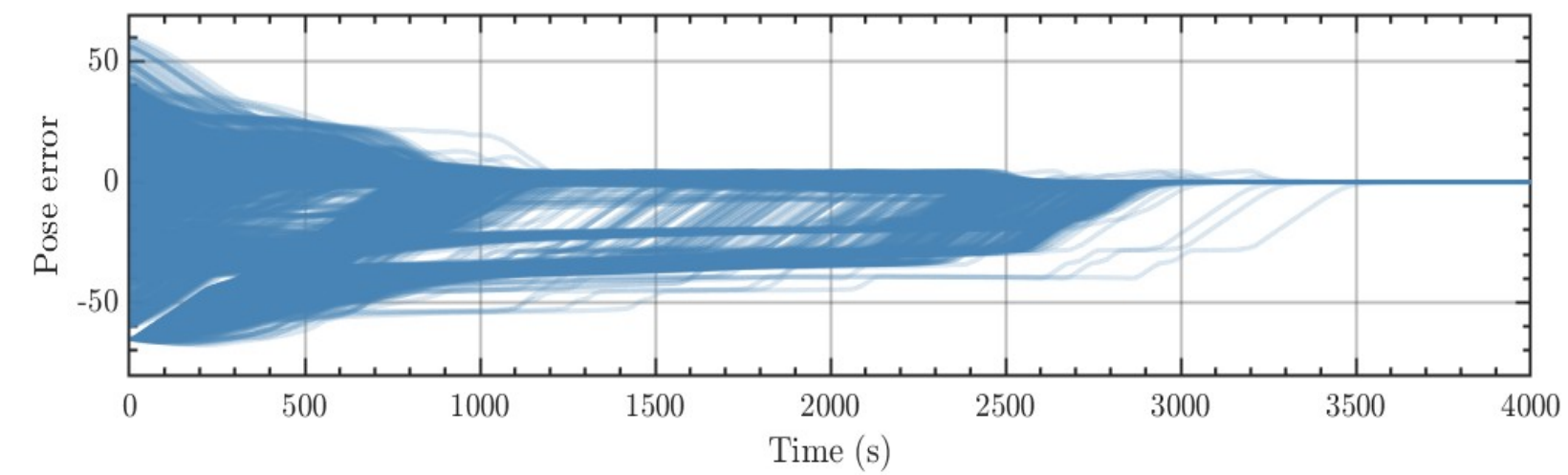}
    \label{pose_error_w_cir_iss}}
    \subfigure[Velocity]{
        \centering
        \includegraphics[scale=0.3]{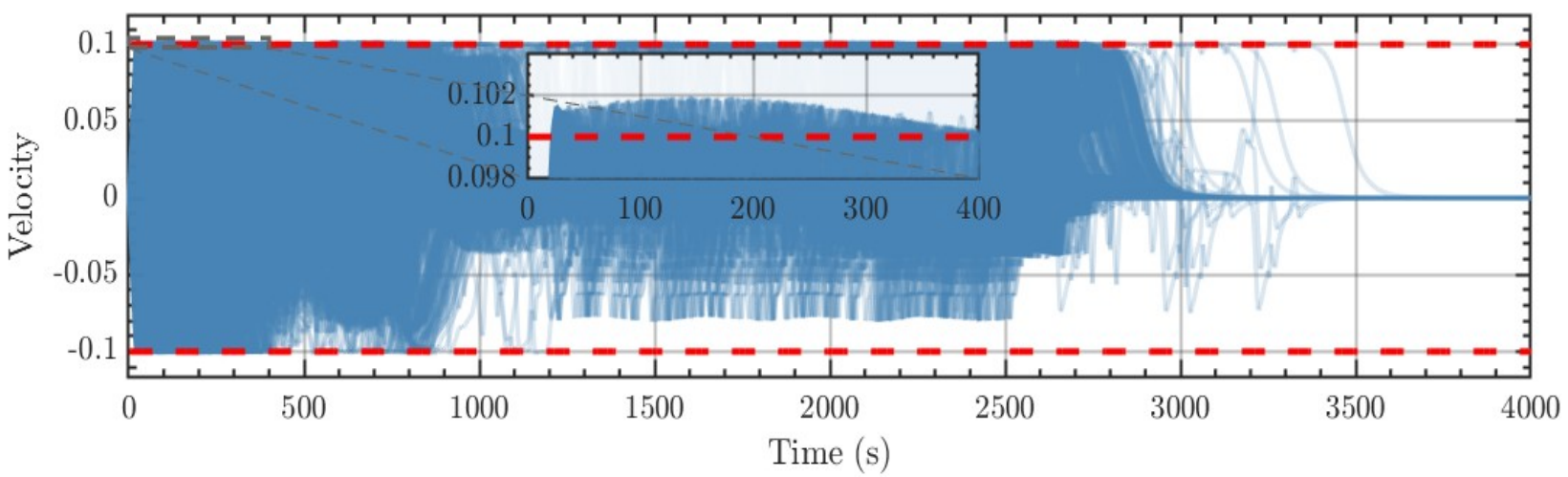}
    \label{velocity_w_cir_iss}}
    \caption{MonteCarlo Simulation of ISS}
    \label{MonteCarloSimulation_iss}
\end{figure}



\section{CONCLUSIONS}
\label{conclusion}
This paper proposes a safe robust control framework for spacecraft proximity operations in the presence of external disturbances and complex target geometries. Leveraging INRs, a neural SDF is learned from point cloud data through improved implicit geometric regularization. This learned SDF tends to over-approximate the target geometries and provides the necessary distance and gradient information for safe controller design. Based on this characterization, a two-layer safe robust control framework is developed. In the top layer, a CCBF-SOCP is designed to generate a reference velocity that guarantees safety. The bottom layer contains a DO and an improved CBF that considers the DO estimation error, which enables the construction of a smooth safety filter and enhances the robustness to external disturbances. Simulations verify the effectiveness of the proposed framework. Future research will extend this work to non-cooperative targets. For spacecraft that cannot be modeled a priori, the key challenge is to perform combat missions based on real-time local point cloud information. Addressing this challenge will be the focus of future research.

\section{ACKNOWLEDGMENT}
\label{acknowledgment}
\bibliographystyle{IEEEtran}
\bibliography{ref}

\begin{IEEEbiography}
    [{\includegraphics[width=1in,height=1.25in,clip,keepaspectratio]{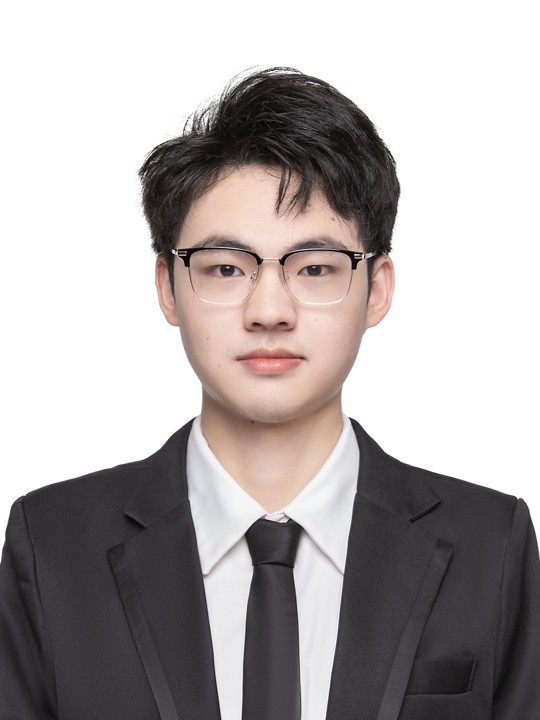}}]
	{Hang Zhou}{\space} received the B.S. dgree from the School of Information Engineering of Zhejiang University of Technology, Hangzhou, China, in 2023. He is currently pursing the Ph.D. degree in Aeronautical and Astronautical Science and Technology at Zhejiang University, Hangzhou, China. His main research interests include spacecraft 6-DOF control, spacecraft safety critical control, spacecraft trajectory optimization.
\end{IEEEbiography}

\begin{IEEEbiography}
	[{\includegraphics[width=1in,height=1.25in,clip,keepaspectratio]{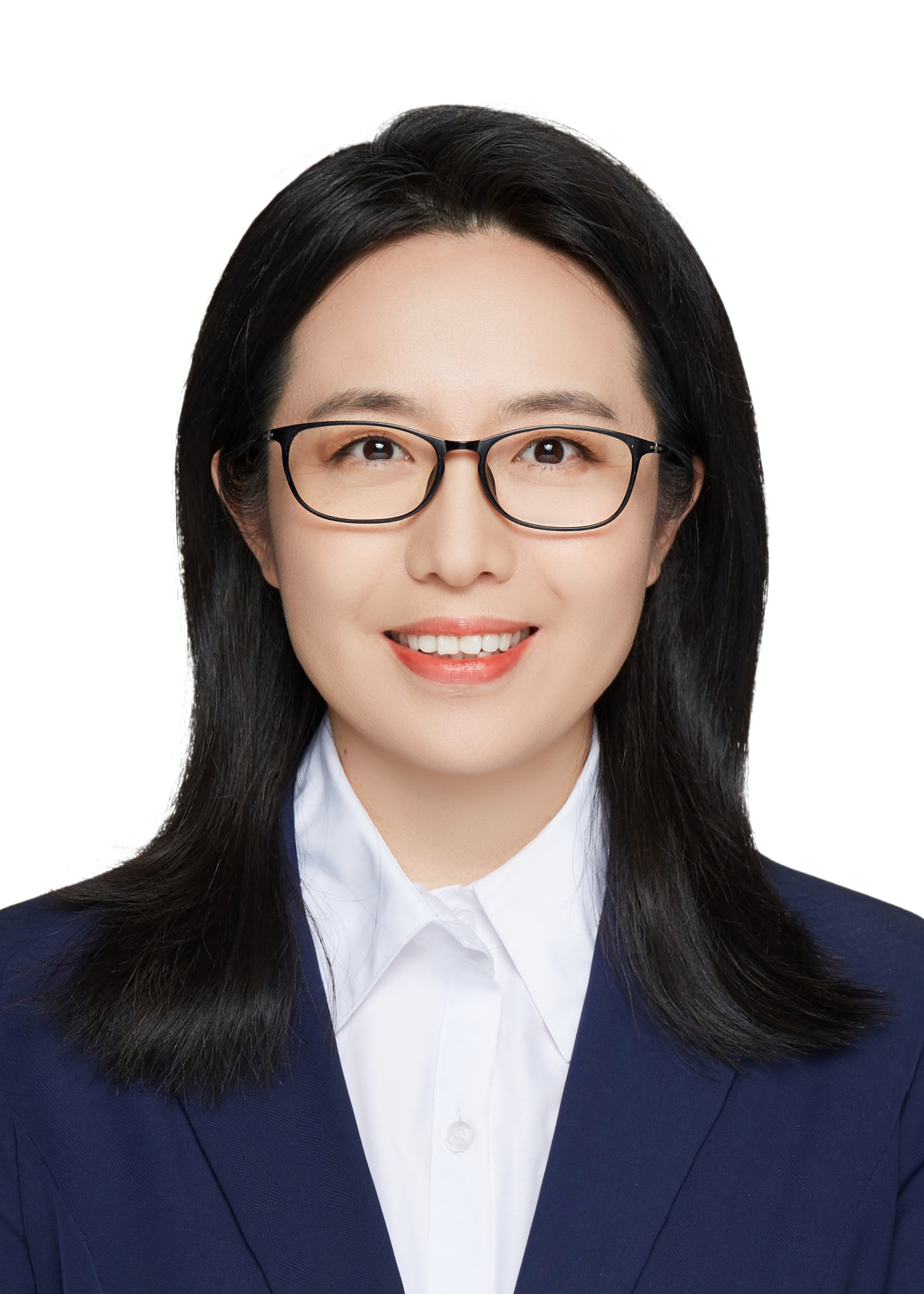}}]
	{Tao Meng}{\space}
	received the B.S. degree in Electronic science and technology, Zhejiang University, Hangzhou, China, in 2004, the M.S. degree in Electronic science and technology, Zhejiang University, Hangzhou, China, in 2006, and the Ph.D. degree in Electronic science and technology, Zhejiang University, Hangzhou, China, in 2009. She is currently a Professor at the School of Aeronautics and Astronautics. Her research interest includes attitude control, orbital control, and constellation formation control of micro-satellite.
\end{IEEEbiography}

\begin{IEEEbiography}
	[{\includegraphics[width=1in,height=1.25in,clip,keepaspectratio]{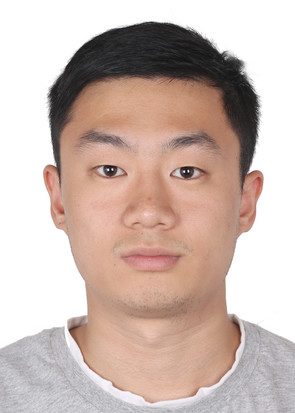}}]
	{Kun Wang}{\space}
	received the B.S. degree from College of Control Science and Engineering, Zhejiang University, Hangzhou, China, in 2019 and the Ph.D. degree in Aeronautical and Astronautical Science and Technology in Zhejiang University, Hangzhou, China, in 2024. He is currently a Post-Doctoral of the Hangzhou Innovation Institute, Beihang University, Hangzhou, China. 
	His research interests include spacecraft 6-DOF control, spacecraft safety critical control and spacecraft formation control. 
\end{IEEEbiography}

\begin{IEEEbiography}
	[{\includegraphics[width=1in,height=1.25in,clip,keepaspectratio]{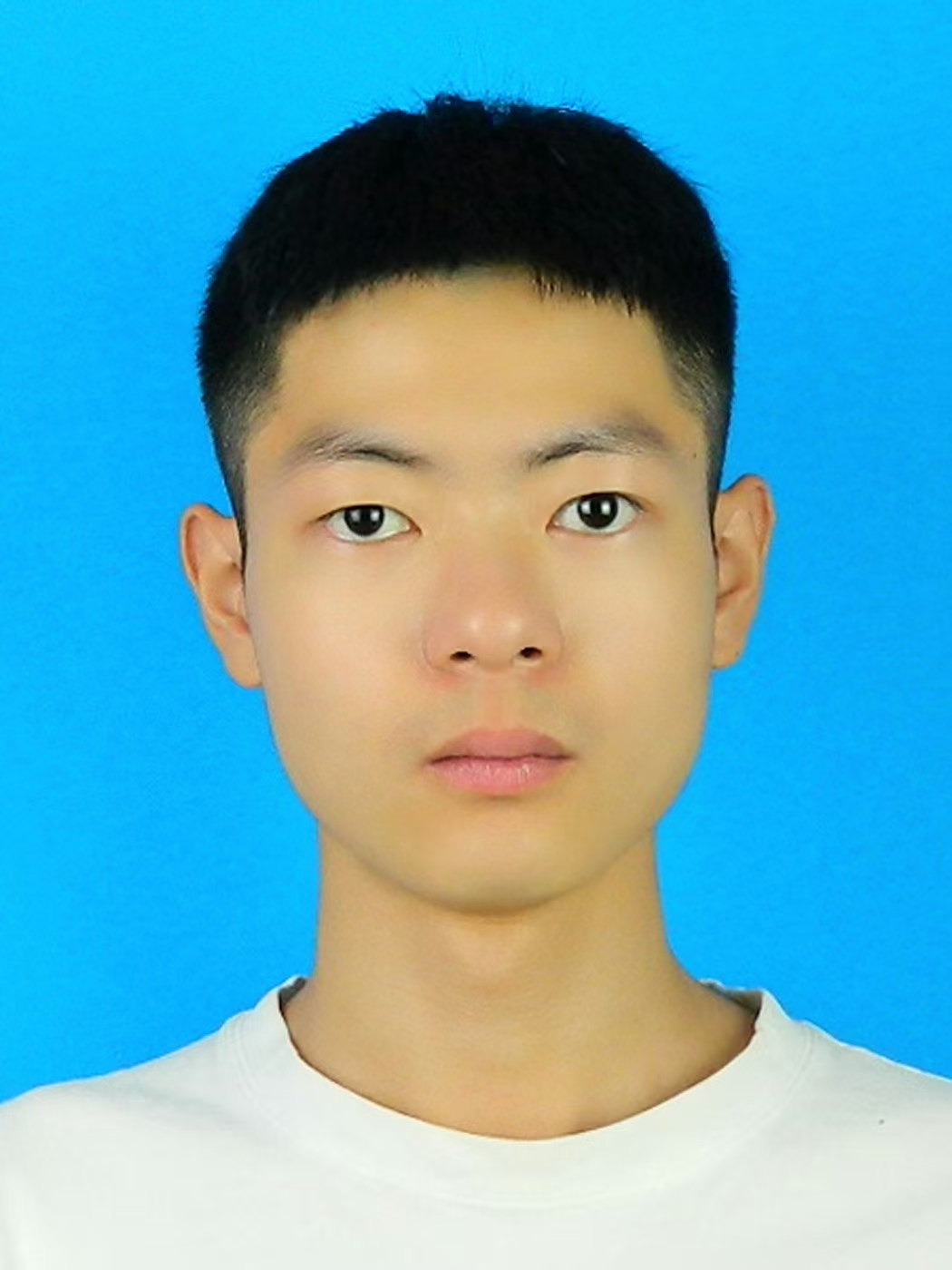}}]
	{Chengrui Shi}{\space}	 
    received the B.S. degree from College of Control Science and Engineering, Zhejiang University, Hangzhou, China, in 2024. He is currently pursuing the Maser's degree in Aeronautical and Astronautical Science and Technology at Zhejiang University, Hangzhou, China. His research interests include data-driven and safety-critical control and its application in space autonomous systems.
\end{IEEEbiography}

\begin{IEEEbiography}
    [{\includegraphics[width=1in,height=1.25in,clip,keepaspectratio]{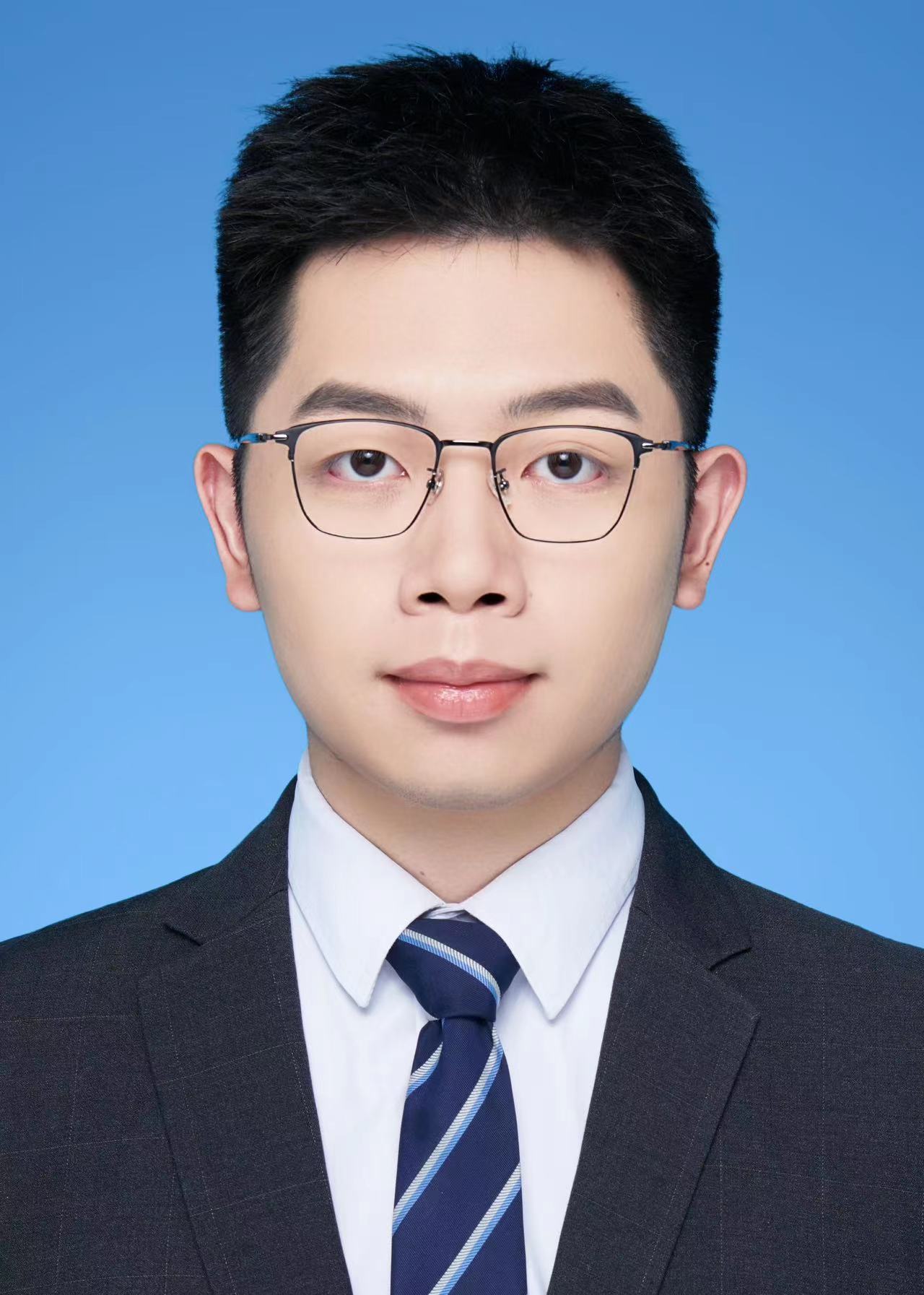}}]{Renhao Mao}{\space}
    received the B.S. dgree from the School of Astronautics, Northwestern Polytechnical University, Xi'an, China, in 2023. He is currently pursing the Ph.D. degree in Electronic Science and Technology at Zhejiang University, Hangzhou, China. His main research interests include space robots, data driven control, and machine learning.
\end{IEEEbiography}

\begin{IEEEbiography}
	[{\includegraphics[width=1in,height=1.25in,clip,keepaspectratio]{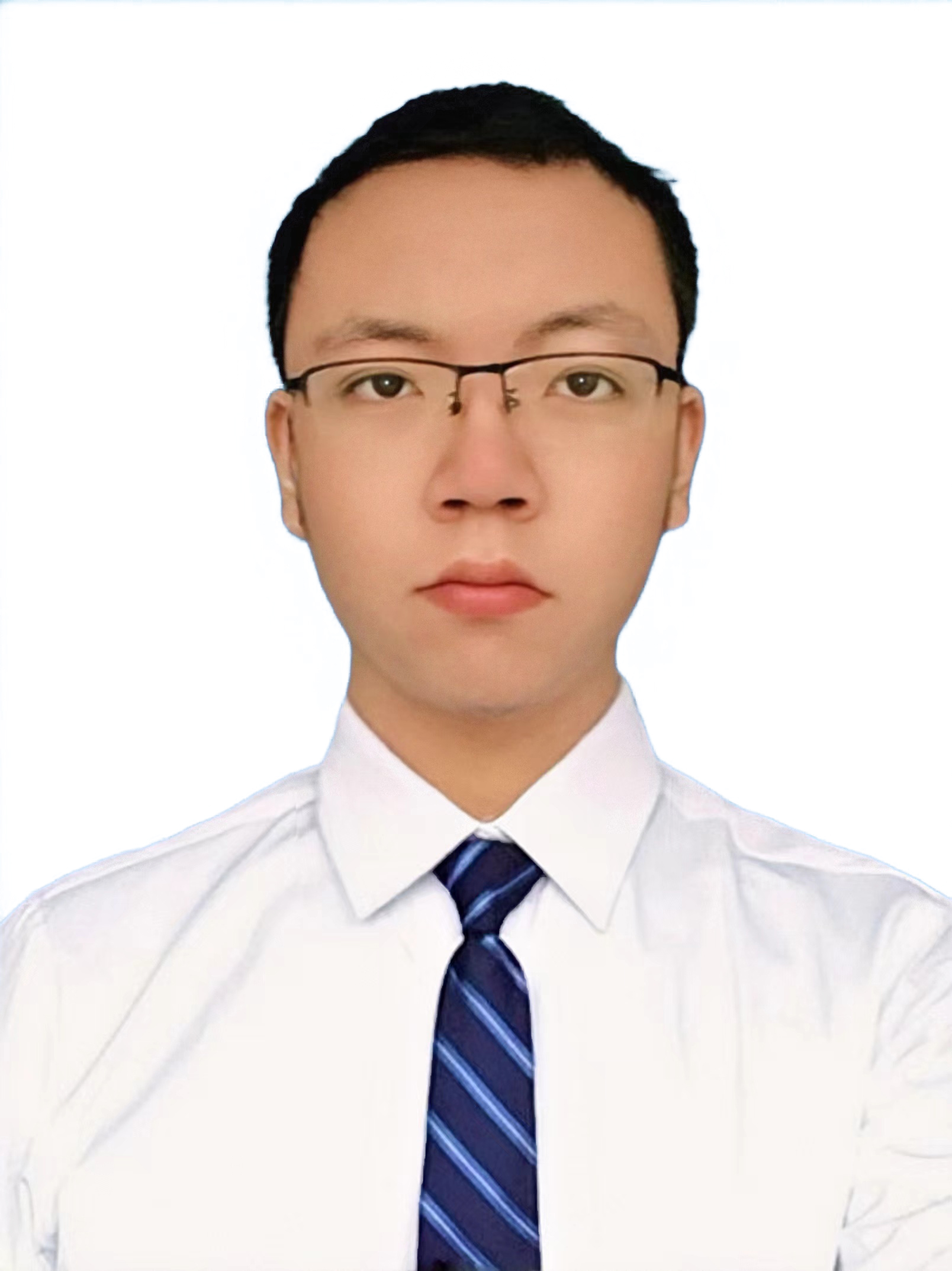}}]
	{Weijia Wang}{\space} 
	received the B.S. degree in Aerospace Engineering from the University of Electronic Science and Technology of China, Chengdu, China, in 2020. He is currently pursuing the Ph.D. degree in Aeronautical and Astronautical Science and Technology at Zhejiang University, Hangzhou, China. His research interests include model predictive control, safe critical control, and learning-based control for 6-DOF spacecraft formation.
\end{IEEEbiography}

\begin{IEEEbiography}
	[{\includegraphics[width=1in,height=1.25in,clip,keepaspectratio]{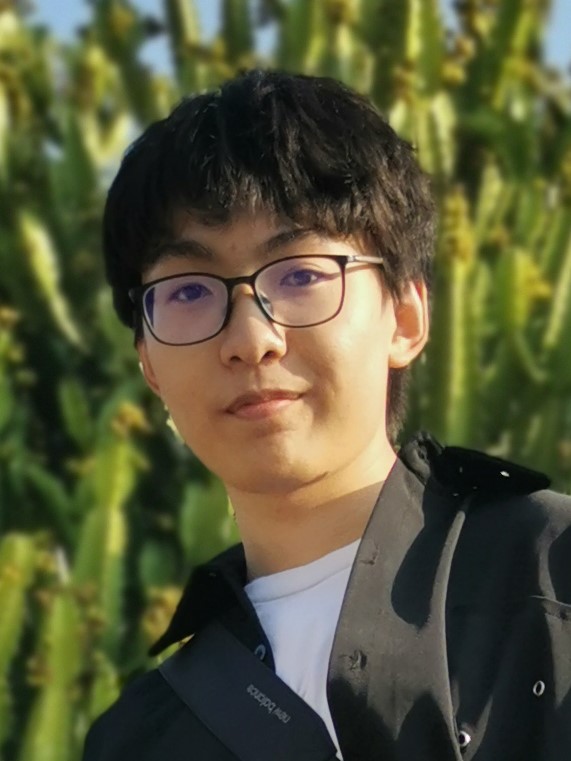}}]
	{Jiakun Lei}{\space} 
	received the B.S. degree in Automatic Control, from the University of Electronic Science and Technology of China (UESTC), Chengdu, China, in 2019; Ph.D. degree in Aeronautic and Astronautic Science and Technology, from the Zhejiang University, Hangzhou, China, in 2024. He is now a post doc of Hangzhou International Innovation Institute, Beihang University, Hangzhou, China. His research interests include constrained spacecraft control, Autonomous System, adaptive control, Safety-critical control and their application in Aerospace Engineering
\end{IEEEbiography}

\end{document}